\newcommand{\R}{{\rm I\!R}}
\newcommand{\N}{{\rm I\!N}}
\newcommand{\cO}{\mathcal{O}}
\newcommand{\cS}{\mathcal{S}}
\newcommand{\cU}{\mathcal{U}}
\newcommand{\cX}{\mathcal{X}}
\newcommand{\cC}{\mathcal{C}}
\newcommand{\cK}{\mathcal{K}}
\newcommand{\cV}{\mathcal{V}}
\newcommand{\cY}{\mathcal{Y}}
\newtheorem{lemma}{Lemma}
\newtheorem{definition}{Definition}
\newtheorem{theorem}{Theorem}
\newtheorem{corollary}{Corollary}
\newtheorem{assumption}{Assumption}
\newtheorem{remark}{Remark}
\newtheorem{procedure}{Procedure}
\pgfplotsset{compat = 1.13} 
\title{\LARGE \bf
Distributed Sequential Receding Horizon Control of Multi-Agent Systems under Recurring Signal Temporal Logic
}
\author{Eleftherios E. Vlahakis, Lars Lindemann, and Dimos V. Dimarogonas
\thanks{This work was supported by the Swedish
Research Council (VR), the Knut \& Alice Wallenberg Foundation (KAW), the Horizon Europe Grant SymAware and the ERC Consolidator Grant LEAFHOUND.}
\thanks{E. E. Vlahakis and D. V. Dimarogonas are with the Division of Decision and Control Systems, School of Electrical Engineering and Computer Science, KTH Royal Institute of Technology, 10044, Stockholm, Sweden. Email:
        {\tt\small\{vlahakis,dimos\}@kth.se}. L. Lindemann is with the Thomas Lord Department of Computer Science, Viterbi School of Engineering, University of Southern California,  Los Angeles, 90089, CA, USA. Email:
        {\tt\small llindema@usc.edu}.}
        %
}
\begin{document}

\maketitle
\thispagestyle{empty}
\pagestyle{empty}

\begin{abstract}
We consider the synthesis problem of a multi-agent system under signal temporal logic (STL) specifications representing bounded-time tasks that need to be satisfied recurrently over an infinite horizon. Motivated by the limited approaches to handling recurring STL systematically, we tackle the infinite-horizon control problem with a receding horizon scheme equipped with additional STL constraints that introduce minimal complexity and a backward-reachability-based terminal condition that is straightforward to construct and ensures recursive feasibility. Subsequently, we decompose the global receding horizon optimization problem into agent-level programs the objectives of which are to minimize local cost functions subject to local and joint STL constraints. We propose a scheduling policy that allows individual agents to sequentially optimize their control actions while maintaining recursive feasibility. This results in a distributed strategy that can operate online as a model predictive controller. Last, we illustrate the effectiveness of our method via a multi-agent system example assigned a surveillance task.
\end{abstract}

\section{INTRODUCTION}

Various applications of multi-agent systems require agents to undertake recurring tasks that must be executed over an infinite time horizon. Examples 
include area coverage in surveillance systems, traffic management in intelligent transportation, smart harvesting in precision agriculture, and more. Signal temporal logic (STL) \cite{MalerSTL2004} is an important formalism for precisely defining, among others, a wide range of such tasks. STL is equipped with quantitative semantics \cite{Donze2010}, allowing one to capture the robustness of a formula, roughly measuring the distance from the space of falsifying system runs \cite{Fainekos2009}. Although formulation and verification of complex recurring multi-agent specifications are possible via STL, control synthesis from STL becomes 
challenging as the number of agents and the planning horizon grow.

A synthesis problem under recurring STL can be formulated as
\begin{subequations}\label{eq:inf_horizon_problem_intro}
\begin{align}
    \operatorname*{Minimize}_{\bm{u}(0),\; \bm{x}(0)}\;  &\sum_{t=0}^\infty \ell(x(t),u(t)) \label{eq:cost_intro} \\
     \text{subject to~} & x(t+1) = f(x(t),u(t)), \; x(0)=x_0,\label{eq:MAS_intro} \\
    & \bm{x}(0) \models \square_{[0,\infty)} \phi, \label{eq:recurring_STL_intro}  
\end{align}    
\end{subequations}
where $\ell:\R^n\times\R^m\to \R_{\geq 0}$ is a cost function, $\bm{u}(0) = (u(0),u(1),\ldots)$, $\bm{x}(0) = (x(0),x(1),\ldots)$ are the optimization variables denoting admissible input and state trajectories, respectively, with $\bm{x}(0)$ representing an infinite run of the system \eqref{eq:MAS_intro} that is required to satisfy $\phi$ in \eqref{eq:recurring_STL_intro} at every time step, with $\square$ denoting the \textit{always} operator and $\phi$ an STL formula. This problem is computationally challenging and hard to solve directly due to infinite memory requirements.

Intuitively, by iteratively solving a finite-horizon variant of \eqref{eq:inf_horizon_problem_intro} with a sufficiently large optimization horizon and tracking past computations, a feasible input trajectory $\bm{u}(0)$ can be constructed by collecting the first element of the optimal solution from each iteration. Such a receding horizon approach to addressing \eqref{eq:inf_horizon_problem_intro} recursively has been proposed in \cite{MurrayCDC2014} via mixed-integer programming (MIP) based on binary encoding of STL constraints. The soundness and completeness of this approach for STL formulas defined over linear predicates have resulted in approaches to reactive and robust model predictive control (MPC) of single-agent systems under STL \cite{Farahani2015, Sadraddini2015, RamanHSCC2015, GhoshHSCC2016}. However, due to the complexity increase caused by the use of binary variables \cite{Kurtz2022}, MIP-based synthesis methods are problematic when dealing with large or unbounded STL formulas 
and multi-agent settings.

To bypass complexity issues, the authors in \cite{LarsAutomatic2019} introduce discrete average space robustness, which allows the formulation of tractable optimization problems for control synthesis. Robustness metrics of STL that fit receding horizon schemes have also been proposed in \cite{Pant2017,MehdipourACC2019}. The appeal of a receding horizon approach relies on the validity that all iterations yield a solution. Although recursive feasibility is well explored for classical control objectives \cite{Mayne2000}, there exist only few results in recursive feasibility of receding horizon control under STL \cite{SadraddiniTAC2019, CharitidouTAC2023}.

Here, we address the control synthesis problem \eqref{eq:inf_horizon_problem_intro} of a multi-agent system subject to a recurring STL specification, which expresses a bounded-time task that needs to be satisfied recurrently over an infinite horizon, as per \eqref{eq:recurring_STL_intro}. We introduce a receding horizon scheme to attack \eqref{eq:inf_horizon_problem_intro} iteratively and propose additional constraints and terminal conditions that guarantee recursive feasibility enabling an MPC implementation. 
Subsequently, we decompose the receding horizon control problem into smaller agent-level problems and propose a distributed receding horizon scheme that iteratively addresses \eqref{eq:inf_horizon_problem_intro}, preserving recursive feasibility. This is facilitated by a scheduling policy that allows agents to sequentially optimize their actions, adhering to the couplings of the global STL formula. 

The remainder of the paper is organized as follows. Our MAS setup is introduced in Sec. \ref{sec:preliminaries}. The proposed receding horizon scheme and its distributed version are in Sec. \ref{sec:MAScontrol}. A numerical example is in Sec. \ref{sec:example}, whereas concluding remarks are discussed in Sec. \ref{sec:conclusion}.

\section{Preliminaries and Multi-Agent Setup}\label{sec:preliminaries}

\subsection{Notation}

The sets of real numbers and nonnegative integers are $\R$ and $\N$, 
respectively. Let $N\in \N$. Then, $\N_{[0,N]}=\{0,1,\ldots,N\}$. 
The transpose of $\xi$ is $\xi^\intercal$. 
The identity matrix is $I_n\in\R^{n\times n}$. The $j$th element of a vector $a$ is denoted as $(a)_j$.  
Let $x_{1},\ldots,x_{n}$ be vectors not necessarily of identical dimensions. 
Then, $x=(x_{1},\ldots,x_{n}) = [x_{1}^\intercal  \;\cdots\;x_{n}^\intercal  ]^\intercal$. 
We denote by $\bm{x}(a:b)=(x(a),\ldots,x(b))$ an aggregate vector consisting of $x(t)$, 
$t\in \N_{[a,b]}$, representing a trajectory. 
When it is clear from the context, we write $\bm{x}(t)$, omitting the endpoint. Let $x_i(t)$, for $t\in\N_{[0,N]}$ and $i\in\N_{[1,M]}$. 
Then, $\bm{x}(0:N)=(x(0),\ldots,x(N))$ denotes an aggregate trajectory when 
$x(t)=(x_1(t),\ldots,x_M(t))$, $t\in\N_{[0,N]}$. We denote by $x(t_\tau)$ the prediction of $x(t+\tau)$ carried out at time $t$. We also denote by $\bm{x}(t_{a:b})$ the predicted trajectory $(x(t_a),\ldots,x(t_b))$. The halfspace representation of a convex compact polytopic set is $\cY = \{y\mid G_y y \leq g_y\}$ with the inequalities applied elementwise. The cardinality of $\cV$ is $|\cV|$. The Minkowski sum of 
$S_1\subseteq \R^n$ and $S_2\subseteq \R^n$ is 
$S_1\oplus S_2=\{s_1+s_2 \mid s_1 \in S_1,\; s_2 \in S_2\}$. The Kronecker product of $A\in\R^{n\times m}$ and $B\in \R^{p\times q}$ is $A\otimes B \in\R^{np\times mq}$. The remainder of the division of $a$ by $b$ is $\mathrm{mod}(a,b)$.

\subsection{Signal Temporal Logic}\label{sec:STL}

We consider STL formulas with syntax 
\begin{equation}\label{eq:STL_syntax}
    \phi \coloneqq \top \mid \pi \mid  \lnot \phi  \mid  \phi_1 \wedge \phi_2 \mid \phi_1 U_{[a,b]}\phi_2,
\end{equation}
where $\pi\coloneqq (\mu(x) \geq 0)$ is a predicate, $\mu:\R^{n} \to \R$ being a predicate function, and $\phi_1$, $\phi_2$ are  STL formulas, which are specified recursively using predicates $\pi$, logical operators $\top$, $\neg$, and $\wedge$, and the 
\textit{until} temporal operator $U$, with $[a,b]\equiv \N_{[a,b]}$. Using the syntax in \eqref{eq:STL_syntax}, we may define $\lor$ (\textit{or}), $\lozenge$ (\textit{eventually}) and $\square$ (\textit{always}) 
operators, 
e.g., $\phi_1\lor \phi_2 = \neg (\neg \phi_1 \wedge \neg \phi_2)$, 
$\lozenge_{[a,b]}\phi = \top U_{[a,b]}\phi$, 
and $\square_{[a,b]}\phi = \lnot \lozenge_{[a,b]}\lnot \phi$.



We denote by $\bm{x}(t) \models \phi$ the satisfaction of the formula $\phi$ defined over the trajectory $\bm{x}(t)=(x(t),x(t+1),\ldots)$. 
The validity of an STL formula $\phi$ defined over a trajectory $\bm{x}(t)$ is verified as follows: 
$\bm{x}(t) \models \pi \Leftrightarrow \mu(x(t)) \geq 0$, 
$\bm{x}(t) \models \lnot \phi \Leftrightarrow \lnot(\bm{x}(t) \models \phi)$, 
$\bm{x}(t) \models \phi_1 \land \phi_2 \Leftrightarrow \bm{x}(t) \models \phi_1 \land \bm{x}(t) \models \phi_2$,  
$\bm{x}(t) \models \phi_1 U_{[a,b]}\phi_2 \Leftrightarrow \exists \tau\in  t\oplus \N_{[a,b]}$, 
s.t. $\bm{x}(\tau) \models \phi_2 \land \forall \tau'\in \N_{[t,\tau]}, \bm{x}(\tau') \models \phi_1$.

Based on the Boolean semantics introduced above, the horizon of a formula is recursively defined as \cite{MalerSTL2004}: $N^\pi = 0$, $N^{\lnot\phi} = N^{\phi}$, 
$N^{\phi_1\land \phi_2} = \max(N^{\phi_1}, N^{\phi_2})$, 
$N^{\phi_1\;U_{[a,b]}\phi_2} = b+ \max(N^{\phi_1}, N^{\phi_2})$.


STL is equipped with quantitative metrics for assessing the robustness of a formula \cite{Donze2010}. A scalar-valued function $\rho^\phi: \R^n\times\cdots\times \R^n \to \R$ indicates how robustly a trajectory $\bm{x}(t)$ satisfies a formula $\phi$. The robustness function is defined recursively as follows:
$\rho^\pi(\bm{x}(t)) = \mu(x(t))$, $\rho^{\lnot \phi}(\bm{x}(t)) = -\rho^{\phi}(\bm{x}(t))$,\\ 
$\rho^{\phi_1 \land \phi_2}(\bm{x}(t)) = \min(\rho^{\phi_1}(\bm{x}(t)),\rho^{\phi_2}(\bm{x}(t)))$, \\
${\tiny\rho^{\phi_1 U_{[a,b]}\phi_2}(\bm{x}(t)) = \max_{\tau\in t\oplus \N_{[a,b]}}\left( \min(Y_1(\tau),Y_2(\tau')) \right)}$, with $Y_1(\tau)=\rho^{\phi_1}(\bm{x}(\tau))$, $Y_2(\tau')=\min_{\tau'\in\N_{[t,\tau]}}\rho^{\phi_2}(\bm{x}(\tau'))$, $\pi$ being a predicate, and $\phi$, $\phi_1$, and $\phi_2$ being STL formulas.


\subsection{Multi-agent system}

We consider a multi-agent system (MAS) with discrete-time dynamics written as in \eqref{eq:MAS_intro}:
\begin{equation}\label{eq:MAS}
    x(t+1) = f(x(t),u(t)),
\end{equation}
where $x(t) = (x_1(t),\ldots,x_M(t))\in\cX\subset\R^n$ and $u(t) = (u_1(t),\ldots,u_M(t))\in\cU\subset\R^m$ are aggregate state and input vectors, respectively, with $x_i(t)\in\cX_i\subset\R^{n_i}$, $u_i(t)\in\cU_i\subset\R^{m_i}$, $\cX=\cX_1\times \cdots \times \cX_M$, $\cU=\cU_1\times \cdots \times \cU_M$, $M$ being the number of agents, and $x(0) = x_0$ is the initial condition. We assume that the agents are open-loop dynamically decoupled, i.e., $f(x(t),u(t)) = (f_1(x_1(t),u_1(t)),\ldots,f_M(x_M(t),u_M(t)))$, with $f(x(t),u(t)):\R^n \times \R^m \to \R^n$ and $f_i(x_i(t),u_i(t)):\R^{n_i} \times \R^{m_i} \to \R^{n_i}$, $i\in\N_{[1,M]}$. 

Let $\cV=\N_{[1,M]}$ be the set of agents' indices.
The MAS is tasked with a recurring STL specification $\psi=\square_{[0,\infty)}\phi$, that is, an STL specification, $\phi$, which needs to be satisfied at every time step $t\in\N$. The specification $\phi$ is a conjunctive formula, where each conjunct is an STL task involving a subset of agents $\nu\subseteq \cV$. We call $\nu$ a clique. Let $\cK_\phi$ be the set collecting all cliques induced by $\phi$. Then, the MAS is subject to the recurring STL specification:
\begin{subequations}\label{eq:global_psi}
    \begin{align}
        \psi &= \square_{[0,\infty)} \phi, \label{eq:psi_cliques} \\
        \phi & = \bigwedge_{\nu\in \cK_\phi} \phi_\nu, \label{eq:global_finite_phi}
    \end{align}
\end{subequations}
where a formula $\phi_\nu$ involves individual agents (represented by one-agent cliques), pairs of agents (represented by two-agent cliques), or groups of more than two agents (indicated by multi-agent cliques), with $|\nu|=1$, $|\nu|=2$, or $2<|\nu|\leq M$, respectively. Let $\pi\coloneqq(\mu(y)\geq 0)$ be a predicate in $\phi$, with $y\in\R^{n_y}$. The vector $y\in\R^{n_y}$ represents either an individual state vector, $x_i\in\R^{n_i}$, $i\in\N_{[1,M]}$, 
or an aggregate vector, $x_\nu\in\R^{n_\nu}$, collecting the state vectors of agents in the clique $\nu\in\cK_\phi$.


\begin{remark}
    Under the formulation \eqref{eq:global_psi}, there is no restriction on the number of subtasks that are associated with a clique $\nu\in \cK_\phi$. 
    In fact, $\phi_\nu$ is allowed to be any STL formula consistent with the syntax in \eqref{eq:STL_syntax}.
\end{remark}

\begin{assumption}\label{ass:indentical_horizons}
    We assume that the formulas $\phi_\nu$, $\nu\in \cK_\phi$, have identical horizons which we denote as $N$.  
\end{assumption}

\begin{remark}
    Assumption \ref{ass:indentical_horizons} is not restrictive due to the recurring structure of $\psi$. To see this, let, e.g., $\cK_\phi = \{\nu',\nu''\}$ and write $\psi = \square_{[0,\infty)} (\phi_{\nu'} \land  \phi_{\nu''})$. Let also $N^{\phi_{\nu'}}$ and $N^{\phi_{\nu''}}$ be the horizons of $\phi_{\nu'}$ and $\phi_{\nu''}$, respectively, with $N = \max\{N^{\phi_{\nu'}},N^{\phi_{\nu''}}\}$. Then, the satisfaction of $\psi$ may be enforced by satisfying $\hat{\psi} = \square_{[0,\infty)}((\square_{[0,N-N^{\phi_{\nu'}}]} \phi_{\nu'}) \land (\square_{[0,N-N^{\phi_{\nu''}}]} \phi_{\nu''}))$, where the horizon of $\square_{[0,N-N^{\phi_{\nu}}]} \phi_\nu$ is $N = N-N^{\phi_{\nu}}+N^{\phi_{\nu}}$, with $\nu=\{\nu',\nu''\}$. 
\end{remark}

\subsection{Decomposition of STL formula $\phi$}\label{sec:decomposition}
Let $\operatorname{cl}(i) = \{\nu \in \cK_\phi,\; \nu \ni i\}$ be the set of cliques that contain $i$, and assign a local STL formula to agent $i$ as
\begin{equation}\label{eq:varphi_i}
    \varphi_i = \bigwedge_{q\in \operatorname{cl}(i)}\phi_q.
\end{equation}
Let $q\in \operatorname{cl}(i)$, where $q=(i_1,\ldots,i,\ldots,i_{q'})$, with $q'+1=|q|$. Let a trajectory $\bm{x}_{q}(t)=(x_{q}(t),x_{q}(t+1),\ldots)$, where $x_{q}(t)=(x_{i_1}(t),\ldots,x_i(t),\ldots,x_{i_{q'}}(t))$, with $t\in\N$, and the order $i_1<\cdots<i<\cdots<i_{q'}$ being specified by the lexicographic ordering of the set $\cV=\N_{[1,M]}$. Then, we denote by $\bm{x}_{q}(t)\models \phi_{q}$ the satisfaction of $\phi_{q}$ verified over the aggregate trajectory $\bm{x}_{q}(t)$.

\begin{remark}
    Comparing \eqref{eq:global_finite_phi} and \eqref{eq:varphi_i}, one may easily verify that $\bm{x}(t)\models \bigwedge_{i\in\cV}\varphi_i$ implies $\bm{x}(t)\models \phi$. 
\end{remark}



\section{Main Results}\label{sec:MAScontrol}

\subsection{Receding horizon control}

In this section, we propose a receding-horizon variant of \eqref{eq:inf_horizon_problem_intro} by incorporating additional constraints and terminal conditions that enforce trajectories to exhibit a recurring pattern. Next, we recall the following definitions.


\begin{definition}
    For the system \eqref{eq:MAS}, we denote the precursor set to a set $\cS\subset \cX$ as follows
    \begin{equation}
        \text{Pre}(\cS) = \{ x\in \R^n \mid \exists u\in \cU\; \text{s.t.\;} f(x,u) \in \cS \}.
    \end{equation}
\end{definition}

\begin{definition}[one-Step Controllable Set]
    For the admissible set $\cX$ and a given target set $\cS\subseteq \cX$ the one-step controllable set $\cC_1(\cS)$ of the system \eqref{eq:MAS} is defined as
    \begin{equation}\label{eq:controllable_set_C1}
        \cC_1(\cS) = \text{Pre}(\cS)\cap \cX.
    \end{equation}
\end{definition}
The set $\cC_1(\cS)$ collects all the states of \eqref{eq:MAS} that can be driven to the target set $\cS$ in one step. We will now introduce a receding horizon version of \eqref{eq:inf_horizon_problem_intro} and state its recursive feasibility property. At time $t\in \N$, 
\begin{subequations}\label{eq:receding_horizon_relax_MAS}
    \begin{align}
     \operatorname*{Min.}_{\substack{\bm{u}(t_{0:N})\\  \bm{x}(t_{0:N+1})}}&  \sum_{i=1}^{M}\big( \sum_{k=0}^N \ell_i(x_i(t_k),u_i(t_k)) +V_i(x_i(t_{N+1}))\big) \\
     \text{s.t.\;} x&(t_{k+1}) = f(x(t_k),u(t_k)), \; k \in\N_{[0,N]},\\
     (&\bm{x}(t-N+1:t-1),\bm{x}(t_{0:1})) \models \phi, \label{eq:history1_MAS}\\
    & \vdots \nonumber\\
     (&x(t-1),\bm{x}(t_{0:N-1})) \models \phi, \label{eq:history4_MAS}\\
     \bm{x}&(t_{0:N}) \models \phi,  \label{eq:constraint_phi_MAS}\\ 
    \bm{x}&(t_{1:N+1})\models \phi, \label{eq:con1_MAS}\\
    & \vdots \nonumber\\
     (&x(t_{N+1}),\bm{x}(t_{1:N}))\models\phi, \label{eq:con5_MAS}\\
    (&\bm{x}(t_{1:N}),x(t_0))\models \phi, \label{eq:con6_MAS}\\
    & \vdots \nonumber\\
     (&x(t_N),\bm{x}(t_{0:N-1}))\models\phi, \label{eq:con7_MAS}\\
    x&(t_0) = x(t), \\
    x&(t_k), x(t_{N+1}) \in  \cX,\;  u(t_k) \in \cU,\; k\in\N_{[0,N]}, \label{eq:conX_MAS}\\
     x&(t_N) \in \cC_1(x(t)),\label{eq:varying_terminal_constr_MAS}
    \end{align}
\end{subequations}
where $\ell_i(x_i(t_k),u_i(t_k))$, $V_i(x(t_{N+1}))$, with $\ell_i:\R^{n_i}\times\R^{m_i}\to \R_{\geq 0}$, $V_i:\R^{n_i}\to \R_{\geq 0}$, represent agent-level decomposition of the cost function in \eqref{eq:cost_intro}, $x(t) = (x_1(t), \ldots,x_M(t))$, $\phi = \bigwedge_{\nu\in \cK_\phi}\phi_\nu$, $N$ is the horizon of $\phi$, and $\cC_1(x(t))$ in \eqref{eq:varying_terminal_constr_MAS} is the one-step controllable set of system \eqref{eq:MAS}. The constraints in \eqref{eq:history1_MAS}-\eqref{eq:history4_MAS} consider trajectories composed of historical data, the current state and predicted states, whereas the one in \eqref{eq:constraint_phi_MAS} considers a trajectory originating at $x(t)$. The constraints in \eqref{eq:con1_MAS}-\eqref{eq:con5_MAS} enforce that any trajectory consisting of $N+1$ successive elements of $(\bm{x}(t_{1:N+1}),\bm{x}(t_{1:N}))$ satisfies $\phi$, whereas the ones in \eqref{eq:con6_MAS}-\eqref{eq:con7_MAS} enforce that any trajectory consisting of $N+1$ successive elements of $(\bm{x}(t_{1:N}),\bm{x}(t_{0:N-1}))$ satisfies $\phi$. Last, the terminal condition in \eqref{eq:varying_terminal_constr_MAS} enforces recursive feasibility (see Theorem \ref{thm:recurs_feasibility}), and the constraints in \eqref{eq:conX_MAS} enforce safety and input capacity constraints. 

\begin{remark}
    Given linear predicate functions, if binary encoding is employed as in \cite{MurrayCDC2014}, the constraints in \eqref{eq:con1_MAS}-\eqref{eq:con7_MAS} require $2N$ additional binary variables. 
\end{remark}


\begin{theorem}\label{thm:recurs_feasibility}
    Let the optimization problem \eqref{eq:receding_horizon_relax_MAS} be feasible at $t=0$. Then, it is feasible for all $t\geq 0$.
\end{theorem}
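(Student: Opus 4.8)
\emph{Proof plan.} I would argue by induction on $t$, exhibiting at each step an explicit feasible candidate for \eqref{eq:receding_horizon_relax_MAS} at time $t+1$ constructed from a feasible solution at time $t$. The base case is the hypothesis. For the inductive step, let $u^*(t:t+N|t)$, with induced predicted states $x^*(t:t+N+1|t)$, be feasible at time $t$, and let the plant move under the first input so that $x(t+1)=x^*(t+1|t)$. The terminal constraint \eqref{eq:varying_terminal_constr_MAS} gives $x^*(t+N|t)\in\cC_1(x(t))$, hence there exists $u^+\in\cU$ with $f(x^*(t+N|t),u^+)=x(t)$.

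The candidate I would use at time $t+1$ is the period-$(N+1)$ ``lasso'' trajectory obtained by looping the block $[x(t),x^*(t+1|t),\ldots,x^*(t+N|t)]$, initialized at $x(t+1)=x^*(t+1|t)$: the predicted inputs are $u^*(t+1|t),\ldots,u^*(t+N-1|t)$, then $u^+$, then $u^*(t|t)$, which produces the predicted states $x^*(t+1|t),\ldots,x^*(t+N|t),x(t),x^*(t+1|t)$. Dynamic feasibility holds transition by transition: the first $N-1$ transitions are inherited from the time-$t$ solution, the transition $x^*(t+N|t)\to x(t)$ is realized by $u^+$, and $x(t)\to x^*(t+1|t)$ by $u^*(t|t)$ since $f(x(t),u^*(t|t))=x^*(t+1|t)$. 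The bounds \eqref{eq:conX_MAS} hold because every candidate state lies in $\cX$ (the $x^*(\cdot|t)$ by time-$t$ feasibility, and $x(t)$ as a realized state) and every candidate input in $\cU$. The terminal constraint \eqref{eq:varying_terminal_constr_MAS} at $t+1$ reads $x(t)\in\cC_1(x(t+1))$, which holds because $u^*(t|t)\in\cU$ and $f(x(t),u^*(t|t))=x(t+1)$, so $x(t)\in\text{Pre}(\{x(t+1)\})\cap\cX$.

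The core of the proof is matching the STL constraints, and here the construction is engineered so the matching is essentially a relabeling: the candidate \emph{reuses} the old predictions and \emph{closes the loop} onto $x(t)$ exactly as anticipated by the Lasso and terminal constraints at time $t$, so each length-$(N+1)$ window occurring in \eqref{eq:history1_MAS}--\eqref{eq:con7_MAS} written at $t+1$ is, as a concrete string of states, one already forced to satisfy $\phi$ at time $t$. Concretely, a history-type window \eqref{eq:history1_MAS}--\eqref{eq:history4_MAS} at $t+1$ beginning at a time $s\in[t-N+2,t]$ consists of realized states up to $t$ (including $x(t+1)=x^*(t+1|t)$) followed by reused predictions, hence coincides with the window beginning at $s$ that appeared among \eqref{eq:history1_MAS}--\eqref{eq:constraint_phi_MAS} at time $t$; the window beginning at $t+1$, namely $[x^*(t+1|t),\ldots,x^*(t+N|t),x(t)]$, and all the Lasso windows at $t+1$ are cyclic rotations of the block $[x(t),x^*(t+1|t),\ldots,x^*(t+N|t)]$, every one of which is forced to satisfy $\phi$ by \eqref{eq:constraint_phi_MAS} together with the Lasso constraints \eqref{eq:con6_MAS}--\eqref{eq:con7_MAS} (equivalently \eqref{eq:con1_MAS}--\eqref{eq:con5_MAS}) at time $t$. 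Assumption \ref{ass:indentical_horizons} is what makes ``length-$(N+1)$ window'' a uniform notion, so the same bookkeeping applies to all conjuncts of $\phi=\bigwedge_{\nu\in K_\psi}\phi_\nu$ at once.

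The main obstacle I anticipate is organizational rather than conceptual: introducing index notation clean enough that each of the $O(N)$ STL constraints at time $t+1$ is unambiguously paired with the time-$t$ constraint that certifies it --- in particular, reconciling the ``loop anchored at $x(t)$'' produced by the receding-horizon shift with the precise way the Lasso constraints \eqref{eq:con1_MAS}--\eqref{eq:con7_MAS} are indexed. Once the candidate above is written down, verifying the pairing is routine, which is exactly why the terminal condition \eqref{eq:varying_terminal_constr_MAS} and the Lasso constraints were set up this way.
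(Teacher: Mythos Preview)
Your proposal is correct and follows essentially the same argument as the paper's proof: the identical shifted-and-looped candidate $[u^*(t+1|t),\ldots,u^*(t+N-1|t),u^+,u^*(t|t)]$ is exhibited, and each STL window at $t+1$ is matched to a window already certified at time $t$. Your framing via cyclic rotations of the block $[x(t),x^*(t+1|t),\ldots,x^*(t+N|t)]$ is in fact the cleanest way to do the bookkeeping, and your identification of \eqref{eq:constraint_phi_MAS} together with \eqref{eq:con6_MAS}--\eqref{eq:con7_MAS} at time $t$ as the certifying constraints is the right one; the parenthetical ``equivalently \eqref{eq:con1_MAS}--\eqref{eq:con5_MAS}'' should be dropped, since those windows at time $t$ involve $x^*(t+N+1|t)$, which need not equal $x(t)$.
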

    
\begin{proof}
See Appendix I.
\end{proof}

An immediate consequence of Theorem \ref{thm:recurs_feasibility} is stated next.
\begin{corollary}
Let problem \eqref{eq:receding_horizon_relax_MAS} be feasible at time $t=0$ and denote a (not necessarily optimal) solution to it at time $t\in\N$ by $\bm{u}^\ast(t_{0:N})=(u^\ast(t_0),\ldots,u^\ast(t_N))$. Then, by applying $u(t) = u^\ast(t_0)$ to \eqref{eq:MAS}, $\bm{x}(t)\models \phi$ for all $t\in\N$, or, $\bm{x}(0) = (x(0),x(1),\ldots)$, satisfies $\square_{[0,\infty)}\phi$, that is, $\bm{x}(0) \models \psi$.  
\end{corollary}

\begin{remark}
    Constructing $\cC_1(x(t))$ is challenging for nonlinear systems. As a remedy, we can replace the terminal condition in \eqref{eq:varying_terminal_constr_MAS} with $x(t_{N+1}) = x(t)$ and eliminate the redundant constraints in \eqref{eq:con6_MAS}-\eqref{eq:con7_MAS}. This modification comes at the cost of a more restricted single-point reachability condition, but it maintains recursive feasibility. 
\end{remark}


\subsection{Distributed receding horizon control}

Next, we decompose the global control problem \eqref{eq:receding_horizon_relax_MAS} into agent-level problems while preserving recursive feasibility. Our approach relies on the following assumption.

\begin{assumption}\label{ass:optim_MAS_feasible_at_t0}
    A feasible solution to the problem \eqref{eq:receding_horizon_relax_MAS} for $t=0$ is known.
\end{assumption}

We are now ready to formulate the following optimization problems for all $i\in\N_{[1,M]}$:
\begin{subequations}\label{eq:receding_horizon_relax_node_i}
    \begin{align}
    &\operatorname*{Min.}_{\substack{\bm{u}_i(t_{0:N})\\  \bm{x}_i(t_{0:N+1})}}
    \sum_{k=0}^N \big( \ell_i(x_i(t_k),u_i(t_k)) \big) +V_i(x_i(t_{N+1})) \\
     &\text{s.t.~}  x_i(t_{k+1}) = f_i(x_i(t_k),u_i(t_k)), \; k \in\N_{[0,N]}, \\
    & \hspace{15pt} (\bm{x}_q(t-N+1:t-1),\bm{x}_q(t_{0:1})) \models \phi_q, \label{eq:history1_node_i}\\
    & \hspace{18pt} \vdots \nonumber\\
    & \hspace{15pt} (x_q(t-1),\bm{x}_q(t_{0:N-1})) \models \phi_q, \label{eq:history4_node_i}\\
    & \hspace{15pt}\bm{x}_q(t_{0:N}) \models \phi_q,  \label{eq:constraint_phi_node_i}\\ 
     &\hspace{15pt} \bm{x}_q(t_{1:N+1})\models \phi_q, \label{eq:con1_node_i}\\
    & \hspace{18pt} \vdots \nonumber\\
    & \hspace{15pt} (x_q(t_{N+1}),\bm{x}_q(t_{1:N}))\models\phi_q, \label{eq:con5_node_i}\\
    & \hspace{15pt} (\bm{x}_q(t_{1:N}),x_q(t_0))\models \phi_q, \label{eq:con6_node_i}\\
    & \hspace{18pt} \vdots \nonumber\\
    & \hspace{15pt} (x_q(t_N),\bm{x}_q(t_{0:N-1}))\models\phi_q, \label{eq:con7_node_i}\\
    & \hspace{15pt} \forall q \in \operatorname{cl}(i), \nonumber\\
    & \hspace{15pt} x_i(t_0) = x_i(t), \\
    & \hspace{15pt} x_i(t_k), x_i(t_{N+1}) \in \cX_i,\;u_i(t_k) \in \cU_i,\; k\in\N_{[0,N]}, \\
    & \hspace{15pt} x_i(t_N) \in \cC_1(x_i(t)).\label{eq:varying_terminal_constr_node_i}
    \end{align}
\end{subequations}

The main idea for introducing \eqref{eq:receding_horizon_relax_node_i} is to attack the global problem, \eqref{eq:receding_horizon_relax_MAS}, by solving smaller agent-level problems in a distributed fashion, while preserving recursive feasibility.

\begin{lemma}\label{lem:receding_horizon_prob_node_i_feasibility}
    Let Assumption \ref{ass:optim_MAS_feasible_at_t0} hold. Then, there is always an admissible input sequence that renders \eqref{eq:receding_horizon_relax_node_i} feasible for all $t\in\N$ and all agents $i\in\N_{[1,M]}$. 
\end{lemma}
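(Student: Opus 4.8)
The plan is to reduce the feasibility of each local problem \eqref{eq:receding_horizon_relax_node_i} to the feasibility of the global problem \eqref{eq:receding_horizon_relax_MAS}, which is guaranteed for all $t\geq 0$ by Theorem \ref{thm:recurs_feasibility} under Assumption \ref{ass:optim_MAS_feasible_at_t0}. The key observation is that the local problem for node $i$ is, structurally, a projection of the global problem: its dynamics constraint is exactly the $i$th block $f_i$ of the decoupled dynamics $f$; its state and input constraints are the $i$th factors $\cX_i$, $\cU_i$ of the product sets $\cX$, $\cU$; its terminal constraint $x_i(t+N|t)\in\cC_1(x_i(t))$ is the $i$th-block version of \eqref{eq:varying_terminal_constr_MAS}; and, crucially, the STL constraints range only over $q\in i\cup\{ij\}_{j\in\cT_i}$, i.e., exactly the cliques $\nu\in K_\psi$ with $\nu\cap i\neq\emptyset$, which by the Remark following \eqref{eq:hat_phi_i} are precisely the conjuncts of $\hat\phi_i$. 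Since $\phi=\bigwedge_{\nu\in K_\psi}\phi_\nu$, any trajectory satisfying all the Lasso/history/terminal STL constraints of the \emph{global} problem a fortiori satisfies the subset of those constraints indexed by the cliques touching $i$.

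Concretely, I would argue as follows. Fix $t\geq 0$. By Theorem \ref{thm:recurs_feasibility}, \eqref{eq:receding_horizon_relax_MAS} is feasible at time $t$; let $[x(t:t+N|t)]$, $[u(t:t+N|t)]$ be any feasible point, and write $x(k|t)=(x_1(k|t),\ldots,x_M(k|t))$, $u(k|t)=(u_1(k|t),\ldots,u_M(k|t))$ componentwise. I then claim $[x_i(t:t+N|t)]$, $[u_i(t:t+N|t)]$ is feasible for \eqref{eq:receding_horizon_relax_node_i}. The dynamics $x_i(k+1|t)=f_i(x_i(k|t),u_i(k|t))$ hold by decoupling of $f$; the pointwise constraints $x_i(k|t)\in\cX_i$, $u_i(k|t)\in\cU_i$ follow from $x(k|t)\in\cX=\cX_1\times\cdots\times\cX_M$ and likewise for $\cU$; the initial condition $x_i(t|t)=x_i(t)$ is the $i$th block of $x(t|t)=x(t)$; and the terminal condition $x_i(t+N|t)\in\cC_1(x_i(t))$ follows because the decoupled structure of $f$ makes $\cC_1(x(t))$ itself a product $\cC_1(x_1(t))\times\cdots\times\cC_1(x_M(t))$ (each block can be steered independently to its own one-point target), so membership of the aggregate in $\cC_1(x(t))$ implies membership of each block in $\cC_1(x_i(t))$. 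Finally, for each $q\in i\cup\{ij\}_{j\in\cT_i}$ we have $\phi_q$ among the conjuncts of $\phi$, and the corresponding history constraint \eqref{eq:history1_node_i}--\eqref{eq:history4_node_i}, the constraint \eqref{eq:constraint_phi_node_i}, and the Lasso constraints \eqref{eq:con1_node_i}--\eqref{eq:con7_node_i} for $\phi_q$ are implied by the analogous constraints \eqref{eq:history1_MAS}--\eqref{eq:history4_MAS}, \eqref{eq:constraint_phi_MAS}, \eqref{eq:con1_MAS}--\eqref{eq:con7_MAS} for $\phi$, since $\rho^{\phi}=\min_{\nu\in K_\psi}\rho^{\phi_\nu}\leq\rho^{\phi_q}$ (equivalently, $x\models\phi\Rightarrow x\models\phi_q$), and $\phi_q$ depends only on the states of the agents in the clique, all of which are present in the aggregate trajectory. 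Hence the claimed local solution is admissible, and since $t$ was arbitrary, \eqref{eq:receding_horizon_relax_node_i} is feasible for all $t\geq 0$ and all $i$.

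The main obstacle is the terminal constraint: asserting $\cC_1(x(t))=\prod_i\cC_1(x_i(t))$ requires spelling out that $f$ decoupled and $\cU=\prod_i\cU_i$ together make $\mathrm{Pre}$ and the intersection with $\cX=\prod_i\cX_i$ factor through the product, which is where I would be most careful (and where the Remark's alternative terminal condition $x(t+N+1|t)=x(t)$ makes the decomposition transparent, since a single-point target trivially factors). A secondary, mostly bookkeeping point is that the history segments $[x_q(t-N+1:t-1),\ldots]$ appearing in \eqref{eq:history1_node_i}--\eqref{eq:history4_node_i} must be read as the committed past closed-loop states restricted to the clique $q$; one should note they are the same past data used in the global problem, so consistency is automatic. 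I would also remark that feasibility here is established via the particular "restriction of the global solution" certificate, so it does not by itself say the \emph{independently} computed local optima remain jointly consistent — that is exactly what the scheduling policy of the next subsection is designed to enforce — but for the statement of Lemma \ref{lem:receding_horizon_prob_node_i_feasibility}, existence of an admissible local input is all that is required.
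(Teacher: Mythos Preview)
Your argument is correct for the existence claim in the lemma, but it takes a different route from the paper's own proof. You reduce local feasibility at each $t$ to global feasibility at that same $t$ via Theorem~\ref{thm:recurs_feasibility}, and then project the global feasible point onto the $i$th block; your verification that the terminal set factors as $\cC_1(x(t))=\prod_i\cC_1(x_i(t))$ and that each STL conjunct $\phi_q$ is implied by $\phi$ is sound. The paper, by contrast, never re-invokes the global problem after $t=0$: from the single feasible input sequence $[u(0{:}N|0)]$ of Assumption~\ref{ass:optim_MAS_feasible_at_t0} and the return input $\hat u_i(N)$ guaranteed by \eqref{eq:varying_terminal_constr_MAS}, it forms the periodic sequence $[u_i(0|0),\ldots,\hat u_i(N),u_i(0|0),\ldots,\hat u_i(N)]$ and observes that any window of $N{+}1$ successive elements is locally feasible, essentially because the Lasso constraints \eqref{eq:con1_MAS}--\eqref{eq:con7_MAS} were designed exactly to certify all cyclic shifts.

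The trade-off is this: your projection argument is cleaner and makes the ``local problem is a restriction of the global one'' structure explicit, but it silently assumes the global closed-loop trajectory is being followed (otherwise Theorem~\ref{thm:recurs_feasibility} does not tell you the global problem is feasible at the actual state $x(t)$). The paper's cyclic construction uses only the initial data and gives an explicit shift-and-wrap recipe, which is precisely the fallback solution that Procedure~\ref{proc:Ot} and Theorem~\ref{thm:distributed_recursive_feasibility} later rely on when $i\notin\cO(t)$. For the bare existence statement of Lemma~\ref{lem:receding_horizon_prob_node_i_feasibility} either route suffices; for how the lemma is used downstream, the paper's construction is the more operational one.
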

\begin{proof}
    At $t=0$, let $\bm{u}(0_{0:N}) = (u(0_0),\ldots,u(0_N))$, $\bm{x}(0_{0:N+1})=(x(0_0),\ldots,x(0_{N+1}))$ be a known feasible solution to \eqref{eq:receding_horizon_relax_MAS}, with $u(0_\tau) =(u_1(0_\tau),\ldots,u_M(0_\tau))$, $\tau\in\N_{[0,N]}$, $x(0_{\tau'}) =(x_1(0_{\tau'}),\ldots,x_M(0_{\tau'}))$,  $\tau'\in\N_{[0,N+1]}$. By feasibility at $t=0$, there exist inputs $\hat{u}_i(N)$ such that $f_i(x_i(0_N),\hat{u}_i(N)) = x_i(0_0)$, $i\in\N_{[1,M]}$. One can, then, see that there always exist $N+1$ successive elements in the sequence $(u_i(0_0),u_i(0_1),\ldots,\hat{u}_i(N),u_i(0_0),u_i(0_1),\ldots,\hat{u}_i(N))$ that lie in the feasible domain of \eqref{eq:receding_horizon_relax_node_i} for all $i\in\N_{[1,M]}$. 
\end{proof}

Lemma \ref{lem:receding_horizon_prob_node_i_feasibility} states that the recursive feasibility property of the optimization problem \eqref{eq:receding_horizon_relax_MAS} is preserved even if we decompose it into $M$ agent-level problems \eqref{eq:receding_horizon_relax_node_i}. However, simultaneously solving \eqref{eq:receding_horizon_relax_node_i} for agents belonging to the same cliques, $\nu\in\cK_\phi$, becomes problematic, due to the difficulty to address coupling constraints in parallel. As a remedy, we propose a scheduling algorithm wherein only a subset of agents can optimize their control inputs at a given time step, whereas the remaining peers regress a feasible solution from the preceding time step. The implementation of this sequential scheduling gives rise to a distributed receding horizon control policy, while ensuring recursive feasibility. 

\subsection{Distributed sequential model predictive control}

Let $\mathcal{O} = \{q \in \mathcal{V} \mid \forall q' \in \mathcal{V}, \; \operatorname{cl}(q) \cap \operatorname{cl}(q')=\emptyset \}$ be a set consisting of indices of agents that are allowed to solve their local problems \eqref{eq:receding_horizon_relax_node_i} in parallel. Obviously, the set $\cO$ is not unique. We denote by $\cO(t)\subset \cV$ the set of agents selected at time $t\in\N$. We construct such a set as follows. At each time $t\in\N$, we initialize $\cO(t)$ by $v(t)\in \cV$, where $v(t) = \mathrm{mod}(v(t-1),M)+1$, with $M=|\cV|$. The time dependence in $v(t)$ indicates the choice of an agent from the set $\cV$ at time $t$. For example, let $\cV=\{1,2,3\}$. Then, we initialize $\cO(0),\cO(1),\ldots$ by $v(0)=1$, $v(1)=2$, $v(2)=3$, $v(3)=1$, and so forth, respectively. After initialization, by going through the remaining $M-1$ indices in $\cV$, following a lexicographic order, we add an index $q\in\cV$ to $\cO(t)$ if $\operatorname{cl}(q)\cap \operatorname{cl}(q')=\emptyset$ for all $q'\in\cO(t)$. 


\begin{procedure}\label{proc:Ot}
    At time $t\in\N$, if $i\in\cO(t)$, an input sequence, $\bm{u}_i(t_{0:N})$, is produced by solving Problem \eqref{eq:receding_horizon_relax_node_i} (potentially optimally). If $i\notin \cO(t)$, an input sequence, $\bm{u}_i(t_{0:N})$, is constructed by the preceding time step, which is feasible for Problem \eqref{eq:receding_horizon_relax_node_i} (cf. Lemma \ref{lem:receding_horizon_prob_node_i_feasibility}). 
\end{procedure}

Based on the sequential scheduling induced by the construction of the set $\cO(t)$, we are in a position to state the following result and its immediate consequence. 
\begin{theorem}\label{thm:distributed_recursive_feasibility}
    Consider Procedure \ref{proc:Ot} and let Assumption \ref{ass:optim_MAS_feasible_at_t0} hold. Then, \eqref{eq:receding_horizon_relax_node_i} is recursively feasible for all $i\in \cV$. 
\end{theorem}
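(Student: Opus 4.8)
The proof will proceed by induction on $t$, mirroring the structure of the proofs of Theorem~\ref{thm:recurs_feasibility} and Lemma~\ref{lem:receding_horizon_prob_node_i_feasibility}, but the central new ingredient is that, thanks to Procedure~\ref{proc:Ot} and the independence property of $\cO(t)$, no two agents that share a task ever \emph{simultaneously} re-optimize; hence the coupling STL constraints $\phi_q$ appearing in \eqref{eq:receding_horizon_relax_node_i} are never written inconsistently across agents. First I would set up the notation: for each $t\ge 0$ and each agent $i$, let $[\tilde u_i(t{:}t{+}N|t)]$ denote the candidate input sequence that agent~$i$ will use --- for $i\notin\cO(t)$ this is precisely the shifted-and-looped sequence regressed from the previous iteration (the $N{+}1$ successive elements of the periodic extension $[u_i(0|0),\dots,\hat u_i(N),u_i(0|0),\dots,\hat u_i(N)]$ identified in the proof of Lemma~\ref{lem:receding_horizon_prob_node_i_feasibility}), and for $i\in\cO(t)$ agent~$i$ is free to optimize but the same candidate is available as a feasible fallback. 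The inductive hypothesis is that at time $t$ every agent holds such a feasible candidate and that the induced predicted state trajectories $[x_i(t{:}t{+}N{+}1|t)]$ of all agents are mutually consistent, in the sense that for every clique-subformula $\phi_q$ with $q = i\cup\{ij\}_{j\in\cT_i}$, the aggregate trajectory $[x_q(\cdot)]$ assembled from the individual candidates is exactly the corresponding sub-trajectory of the global candidate used in Theorem~\ref{thm:recurs_feasibility}.

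The key step is the consistency argument. Since agents in $\cO(t)$ form an independent set in $\cG$, if $i\in\cO(t)$ then every $j\in\cT_i$ satisfies $(i,j)\in\cE$, hence $j\notin\cO(t)$, so $j$ does \emph{not} re-optimize at time $t$ and its predicted states $x_j(t{:}t{+}N{+}1|t)$ are the regressed (periodic) ones. Therefore, from agent~$i$'s perspective, all the coupling variables $x_j$ entering the constraints \eqref{eq:history1_node_i}--\eqref{eq:con7_node_i} for $q$ involving $i$ are \emph{fixed, known, feasible} trajectories: agent~$i$ solves a well-posed optimization over $u_i$ only, and the looped candidate $[\tilde u_i(t{:}t{+}N|t)]$ lies in its feasible domain by exactly the same shift-and-loop computation as in the proof of Lemma~\ref{lem:receding_horizon_prob_node_i_feasibility} (in particular $x_i(t{+}N|t)\in\cC_1(x_i(t))$ via the input $\hat u_i(N)$ driving $x_i(N|0)$ back to $x_i(0|0)$, using the open-loop decoupling of the dynamics $f_i$). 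Conversely, if $i\notin\cO(t)$ its candidate is feasible for the same reason and no conflict can arise because $i$ is not modifying anything. Thus at each $t$ every local problem \eqref{eq:receding_horizon_relax_node_i} is feasible, and applying the first optimal (or candidate) input and shifting yields, for every agent, a new feasible looped candidate at $t{+}1$, closing the induction. One should also note that the clause $v(t)=\text{mod}(v(t-1),M)+1$ guarantees $\cO(t)\neq\emptyset$ for all $t$, so the schedule is well defined, though this fact is not actually needed for feasibility --- even $\cO(t)=\emptyset$ would leave everyone with a feasible regressed solution.

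The main obstacle I anticipate is making the consistency claim fully rigorous across \emph{all} cliques simultaneously, not just the pairwise ones: the subformulas $\phi_q$ in \eqref{eq:receding_horizon_relax_node_i} range over $q\in i\cup\{ij\}_{j\in\cT_i}$, and when $|\nu|>2$ a single clique-task couples three or more agents, so I must argue that the \emph{entire} group $\nu\setminus\{i\}$ fails to re-optimize whenever $i\in\cO(t)$ --- which again follows from independence of $\cO(t)$ since all pairs $(i,j)$, $j\in\nu\setminus\{i\}$, are edges of $\cG$, but one must be careful that the decomposition \eqref{eq:Ti}--\eqref{eq:hat_phi_i} and the notation $x_{ij}$ for aggregate team states is interpreted so that each multi-agent $\phi_\nu$ is "owned" consistently. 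A secondary subtlety is that two \emph{different} agents $i,i'\in\cO(t)$ might each be adjacent (through different cliques) to a common non-optimizing agent $j$; this is harmless precisely because $j$'s trajectory is frozen and shared, so both $i$ and $i'$ see the same $x_j$. I would handle both points by proving the single invariant: \emph{at every $t$, the collection of individual candidate trajectories is the componentwise restriction of one globally feasible looped trajectory of \eqref{eq:receding_horizon_relax_MAS}}, which makes every STL sub-constraint in every local problem automatically satisfied and sidesteps any need to reason clique-by-clique.
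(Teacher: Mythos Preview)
Your proposal is correct and follows the same line of argument as the paper: the independence of $\cO(t)$ guarantees that an optimizing agent $i\in\cO(t)$ sees only frozen (regressed) trajectories from every neighbor $j\in\cT_i$, so its optimum cannot break any neighbor's local feasibility, and then the shift-and-loop candidate of Lemma~\ref{lem:receding_horizon_prob_node_i_feasibility} closes the induction. The paper's proof is two sentences long and leaves implicit precisely the consistency issues (multi-vertex cliques, shared non-optimizing neighbors, the fact that the regressed candidates assemble into one globally feasible trajectory of \eqref{eq:receding_horizon_relax_MAS}) that you correctly identify and plan to handle via your global invariant; your version is therefore a fleshed-out form of the paper's argument rather than a different approach.
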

\begin{proof}
    The result relies on the construction of the set $\cO(t)$. Let $\bm{u}_i(t_{0:N})$, $\bm{x}_i(t_{0:N+1})$ be feasible input and state trajectories, respectively, obtained at time $t\in\N$ by solving \eqref{eq:receding_horizon_relax_node_i} for $i\in\cO(t)$. Note that $\bm{x}_i(t_{0:N+1})$ does not affect the feasibility of any Problem \eqref{eq:receding_horizon_relax_node_i} corresponding to agent $j\notin \cO(t)$, with $\operatorname{cl}(j)\cap \operatorname{cl}(i)\neq \emptyset$. For such agents, a feasible input sequence for time $t$ may be constructed by the feasible solution obtained in time step $t-1$ (see proof of Lemma \ref{lem:receding_horizon_prob_node_i_feasibility}), which is known to all agents $i\in\cO(t)$ at time $t$. Finally, recursive feasibility follows by Lemma \ref{lem:receding_horizon_prob_node_i_feasibility}.
\end{proof}

\begin{corollary}
    Consider Procedure \ref{proc:Ot} and let Assumption \ref{ass:optim_MAS_feasible_at_t0} hold. Then, at each time $t\in\N$, there exist admissible local inputs $u_i(t_0)$, $i\in\N_{[1,M]}$  synthesizing a multi-agent input $u(t_0)=(u_1(t_0),\ldots,u_M(t_0))$. These inputs, when applied to \eqref{eq:MAS}, yield a multi-agent state trajectory $\bm{x}(t)$ that satisfies $\phi$ for all $t\in\N$, that is, $\bm{x}(0) \models \psi$.  
\end{corollary}

A distributed MPC scheme based on the scheduling algorithm induced by the set $\cO(t)$ and feasible solutions of problems \eqref{eq:receding_horizon_relax_node_i}, denoted as $\bm{u}_i^\ast(t_{0:N})$, $\bm{x}_i^\ast(t_{0:N+1})$, is implemented as follows. At time $t\in\N$:

\begin{enumerate}
    \item[(S1)] Agent $i$ measures its local state and exchanges predictions on its receding-horizon state trajectory with agents $j\in\cV$, where $\operatorname{cl}(j)\cap \operatorname{cl}(i)\neq \emptyset$, as computed in the previous time step. 
    \item[(S2)] If $i\in\cO(t)$, agent $i$ solves \eqref{eq:receding_horizon_relax_node_i} obtaining $\bm{u}_i^\ast(t_{0:N})$, and implements its first component, i.e., $u_i(t) = u_i^\ast(t_0)$. If $i\notin \cO(t)$, $u_i(t) = u_i(t'_1)$, with $t'=t-1$. 
    \item[(S3)] At time $t+1$, agent $i$ repeats steps (S1) and (S2). 
\end{enumerate}


\section{Numerical Example}\label{sec:example}

We exemplify our distributed control strategy using the MAS example considered in \cite[Sec. 6]{LarsAutomatic2019}, which we augment by collision avoidance among agents. The MAS consists of three identical agents, with indices, $\cV=\{1,2,3\}$, with global dynamics, $x(t+1) = Ax(t)+Bu(t)$, where $x(t) = (x_1(t),x_2(t),x_3(t))\in\R^{12}$, $u(t)=(u_1(t),u_2(t),u_3(t))\in\R^{6}$, with $x_i(t)\in\R^4$, $u_i(t)\in\R^2$, $i\in\N_{[1,3]}$, and $A = I_6\otimes \begin{bmatrix}
    1 & 0.1\\0 & 1
\end{bmatrix}$, $B = I_6\otimes \begin{bmatrix}
    0.005\\1
\end{bmatrix}$. The vector $((x_i)_1,(x_i)_3)\in\cX$ denotes the position of agent $i$ in $\cX$, where $\cX = \{\xi\in\R^2 \mid \begin{bmatrix}
    I_2&-I_2
\end{bmatrix}^\intercal \xi \leq (10,10,0,0)\}$, the vector $((x_i)_2,(x_i)_4)$ collects the respective velocities, and the vector $u_i$ collects the accelerations of agent $i$, where $-20\leq (u_i)_j \leq 20$, with $j\in\N_{[1,2]}$, $i\in\N_{[1,3]}$.

The MAS is tasked with surveilling the area $\cX\subset \R^2_{\geq 0}$. This specification is expressed through a recurring STL formula $\psi = \square_{[0,\infty)}\phi$, where $\phi = \phi^1\wedge\phi^2\wedge\phi^3$, with $\phi^1=\phi^{11}\wedge\phi^{12}\wedge\phi^{13}$, $\phi^2=\phi^{21}\wedge\phi^{22}\wedge\phi^{23}$, $\phi^3=\phi^{31}\wedge\phi^{32}$. Specifically, $\phi^1$ defines the admissible workspace, $\phi^2$ specifies the surveillance task, and $\phi^3$ specifies collision avoidance, with $\phi^{11}=\square_{[0,30]}\|((x_1)_1,(x_1)_3)-(5,5)\|_\infty\leq 5$, $\phi^{12}=\square_{[0,30]}\|((x_2)_1,(x_2)_3)-(5,5)\|_\infty\leq 5$,\\ $\phi^{13}=\square_{[0,30]}\|((x_3)_1,(x_3)_3)-(5,5)\|_\infty\leq 5$,\\
$\phi^{21} = \left(\lozenge_{[10,30]}\|((x_1)_1,(x_1)_3)-(5,9)\|_\infty\leq 1\right)\wedge\left(\lozenge_{[10,30]}\|((x_1)_1,(x_1)_3)-(1,5)\|_\infty\leq 1\right)$,\\ 
$\phi^{22} = \left(\lozenge_{[10,30]}\|((x_2)_1,(x_2)_3)-(8,8)\|_\infty\leq 1\right)\wedge\left(\lozenge_{[10,30]}\|((x_2)_1,(x_2)_3)-(2,2)\|_\infty\leq 1\right)$,\\
$\phi^{23} = \left(\lozenge_{[10,30]}\|((x_3)_1,(x_3)_3)-(9,5)\|_\infty\leq 1\right)\wedge\left(\lozenge_{[10,30]}\|((x_3)_1,(x_3)_3)-(5,1)\|_\infty\leq 1\right)$,\\
$\phi^{31}=\square_{[0,30]}\|((x_1)_1,(x_1)_3)-((x_2)_1,(x_2)_3)\|_\infty\geq 0.1$, \\
$\phi^{32}=\square_{[0,30]}\|((x_2)_1,(x_2)_3)-((x_3)_1,(x_3)_3)\|_\infty\geq 0.1$.

To formulate local formulas, $\varphi_i$, as defined in \eqref{eq:varphi_i}, we identify the set of cliques induced by the formula $\phi$ as $\cK_\phi=\{1,2,3,(1,2),(2,3)\}$, and collect the cliques associated with each agent in $\operatorname{cl}(1)=\{1,(1,2)\}$, $\operatorname{cl}(2)=\{2,(1,2),(2,3)\}$, and $\operatorname{cl}(3)=\{3,(2,3)\}$. These yield local formulas $\varphi_1=\phi_1\wedge\phi_{12}$, $\varphi_2=\phi_2\wedge\phi_{12}\wedge\phi_{23}$, and $\varphi_3=\phi_3\wedge\phi_{23}$, where, $\phi_1 = \phi^{11}\wedge\phi^{21}$, $\phi_{12}=\phi^{31}$, $\phi_2 = \phi^{21}\wedge\phi^{22}$, $\phi_{23}=\phi^{32}$, and $\phi_3 = \phi^{13}\wedge\phi^{23}$. Next, based on the local formulas, $\varphi_i$, $i\in\N_{[1,3]}$, we formulate local problems \eqref{eq:receding_horizon_relax_node_i}, where we express infinity-norm-based constraints as conjunctions and disjunctions of linear inequalities and formulate the entire STL constraints in $\varphi_i$ using binary encoding \cite{MurrayCDC2014}.


In our simulations, we select $\ell = \|u(t)\|_1$, $\ell_i=\|u_i(t)\|_1$, and $V_i=0$ as global and local performance criteria, respectively. We initiate the process by solving \eqref{eq:receding_horizon_relax_MAS} for one iteration and subsequently execute the distributed receding horizon procedure (S1)-(S3), as detailed in the previous section. Following the initial iteration, an optimization is performed by either agent 2 alone or agents 1 and 3, determined by the set $\cO(t) \in \{\cO_1,\cO_2\}$, for $t\in\N$, where $\cO_1=\{2\}$ and $\cO_2=\{1,3\}$. The underlying mixed-integer linear programs were formulated using the YALMIP toolbox in MATLAB \cite{Yalmip} and solved using the GUROBI solver \cite{gurobi}. 

Starting from the chosen initial conditions, our distributed control strategy guides the MAS along a trajectory that satisfies the recurring STL formula $\psi$, while respecting input constraints, as shown in Fig. \ref{fig:main}. Computational aspects for the initial centralized iteration and the subsequent distributed iterations are depicted in Fig. \ref{fig:boxplots}. To assess the robustness of our method, we conducted simulations by introducing velocity perturbations to the MAS for an 80-time-step duration. Despite these perturbations, the MAS effectively remains within the permissible workspace, successfully fulfilling its surveillance task, as depicted in Fig. \ref{fig:main2}. To verify both simulations, we employed the robustness function detailed in Sec. \ref{sec:STL}, which remained nonnegative for most of the simulation period. In addition, we performed simulations in which we omitted the proposed terminal constraint. The outcomes reveal that, given the chosen input constraints, the MAS cannot successfully accomplish the surveillance task. This underscores the key role of the terminal condition in our control design.

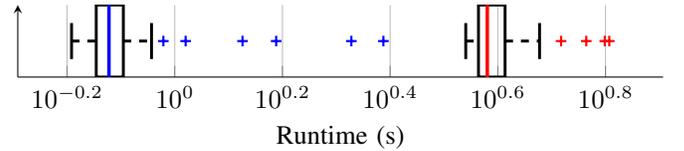
\begin{figure}[ht]
  \centering
\begin{tikzpicture}
\begin{axis}[%
width=4.in,
height=1.in,
at={(0in,0.in)}, 
boxplot/draw direction = x, 
x axis line style = {opacity=0},
y axis line style = {opacity=0},
axis x line* = bottom,
axis y line = left,
enlarge x limits,
xmajorgrids,
yticklabel style = {align=center, font=\small},
yticklabels = {?},
ytick style = {draw=none}, 
xlabel = {Runtime (s)},
xmode=log
]
\addplot [color=black, dashed, line width=1.1pt, forget plot]
  table[row sep=crcr]{%
4.1064146	1\\
4.7586932	1\\
};
\addplot [color=black, line width=1.1pt, dashed, forget plot]
  table[row sep=crcr]{%
3.4706864	1\\
3.6653605	1\\
};
\addplot [color=black, line width=1.1pt, forget plot]
  table[row sep=crcr]{%
4.7586932	0.9625\\
4.7586932	1.0375\\
};
\addplot [color=black, line width=1.1pt, forget plot]
  table[row sep=crcr]{%
3.4706864	0.9625\\
3.4706864	1.0375\\
};
\addplot [color=black,line width=1.1pt, forget plot]
  table[row sep=crcr]{%
3.6653605	0.925\\
4.1064146	0.925\\
4.1064146	1.075\\
3.6653605	1.075\\
3.6653605	0.925\\
};
\addplot [color=red, line width=1.3pt, forget plot]
  table[row sep=crcr]{%
3.80573735	0.925\\
3.80573735	1.075\\
};
\addplot [color=black, only marks, mark=+,  line width=0.8pt, mark options={solid, draw=red}, forget plot]
  table[row sep=crcr]{%
5.2174031	1\\
5.8119277	1\\
6.2887858	1\\
6.4122631	1\\
};
\addplot [color=black, dashed, line width=1.1pt, forget plot]
  table[row sep=crcr]{%
0.802655	1\\
0.9055119	1\\
};
\addplot [color=black, dashed, line width=1.1pt, forget plot]
  table[row sep=crcr]{%
0.6432866	1\\
0.71489	1\\
};
\addplot [color=black, line width=1.1pt, forget plot]
  table[row sep=crcr]{%
0.9055119	0.9625\\
0.9055119	1.0375\\
};
\addplot [color=black, line width=1.1pt, forget plot]
  table[row sep=crcr]{%
0.6432866	0.9625\\
0.6432866	1.0375\\
};
\addplot [color=black, line width=1.1pt, forget plot]
  table[row sep=crcr]{%
0.71489	0.925\\
0.802655	0.925\\
0.802655	1.075\\
0.71489	1.075\\
0.71489	0.925\\
};
\addplot [color=blue, line width=1.3pt, forget plot]
  table[row sep=crcr]{%
0.7548692	0.925\\
0.7548692	1.075\\
};
\addplot [color=black, only marks, line width=0.8pt,  mark=+, mark options={solid, draw=blue}, forget plot]
  table[row sep=crcr]{%
0.9525171	1\\
1.0476409	1\\
1.3362613	1\\
1.5430826	1\\
2.1280186	1\\
2.4403274	1\\
};
\end{axis}

\end{tikzpicture}

\caption{Runtime for computations on an i7-1185G7 CPU at 3.00GHz. (Left) Distributed scheme. (Right) Centralized scheme.}
\label{fig:boxplots}
\end{figure}

\begin{figure*} 
  \centering
  \begin{subfigure}{0.245\textwidth} 
    \definecolor{mycolor1}{rgb}{0.00000,0.44700,0.74100}%
\definecolor{mycolor2}{rgb}{0.85000,0.32500,0.09800}%
\definecolor{mycolor3}{rgb}{0.92900,0.69400,0.12500}%
\begin{tikzpicture}

\begin{axis}[%
width=1.3in,
height=1.16in,
at={(0.in,0.in)},
scale only axis,
xmin=0,
xmax=130,
xlabel style={font=\color{white!15!black},yshift=.2cm},
xlabel={Time ($10^{-1}$s)},
ymin=0,
ymax=9,
ylabel style={font=\color{white!15!black},yshift=-0.2cm},
ylabel={$(x_i)_1$},
axis background/.style={fill=white},
xmajorgrids,
ymajorgrids,
legend columns=2,
legend style={font=\tiny,at={(0.05,0.0)},opacity=0.6, anchor=south west, legend cell align=left, align=left, draw=white!15!black, text opacity=1}
]
\addplot [color=mycolor1, line width=2.0pt]
  table[row sep=crcr]{%
1	1\\
2	1\\
3	1\\
4	1.06476906552094\\
5	1.19431433040191\\
6	1.32476560557404\\
7	1.45642509779102\\
8	1.58950792849645\\
9	1.72495475709798\\
10	1.86165158569988\\
11	1.99834841430178\\
12	2.13504524290368\\
13	2.30321504366754\\
14	2.50285781659337\\
15	2.70250058951916\\
16	2.90214336244494\\
17	3.10178613537071\\
18	3.30142890829649\\
19	3.50107168122226\\
20	3.70071445414803\\
21	3.90035722707383\\
22	3.99999999999966\\
23	3.93208521270572\\
24	3.76096224742524\\
25	3.55418866437797\\
26	3.3474150813307\\
27	3.14064149828343\\
28	2.93386791523615\\
29	2.72709433218888\\
30	2.52032074914161\\
31	2.31354716609434\\
32	2.106773583047\\
33	1.99999999999957\\
34	2.06446439529923\\
35	2.23587791990558\\
36	2.44300259547152\\
37	2.65012727103757\\
38	2.85725194660371\\
39	3.06437662216985\\
40	3.27150129773599\\
41	3.47862597330213\\
42	3.68575064886827\\
43	3.89287532443457\\
44	4.00000000000102\\
45	3.9357054881783\\
46	3.76427765759487\\
47	3.5571356956399\\
48	3.34999373368494\\
49	3.14285177172997\\
50	2.935709809775\\
51	2.72856784782003\\
52	2.52142588586506\\
53	2.31428392391009\\
54	2.10714196195526\\
55	2.00000000000056\\
56	2.06428614744182\\
57	2.23571468239409\\
58	2.44285749546142\\
59	2.65000030852872\\
60	2.857143121596\\
61	3.06428593466329\\
62	3.27142874773057\\
63	3.47857156079785\\
64	3.68571437386513\\
65	3.89285718693231\\
66	3.99999999999938\\
67	3.93571426438667\\
68	3.7642856947541\\
69	3.55714283978144\\
70	3.34999998480878\\
71	3.14285712983612\\
72	2.93571427486345\\
73	2.72857141989079\\
74	2.52142856491813\\
75	2.31428570994547\\
76	2.10714285497264\\
77	1.99999999999963\\
78	2.0642857153353\\
79	2.23571428667549\\
80	2.44285714371152\\
81	2.65000000074753\\
82	2.85714285778352\\
83	3.0642857148195\\
84	3.27142857185549\\
85	3.47857142889147\\
86	3.68571428592746\\
87	3.89285714296411\\
88	4.00000000000143\\
89	3.93571428566461\\
90	3.76428571424019\\
91	3.55714285710234\\
92	3.34999999996451\\
93	3.14285714282668\\
94	2.93571428568885\\
95	2.72857142855101\\
96	2.52142857141318\\
97	2.31428571427535\\
98	2.10714285713759\\
99	1.99999999999989\\
100	2.06428571428812\\
101	2.23571428571652\\
102	2.44285714285912\\
103	2.65000000000171\\
104	2.85714285714429\\
105	3.06428571428687\\
106	3.27142857142945\\
107	3.47857142857203\\
108	3.68571428571461\\
109	3.89285714285713\\
110	3.99999999999957\\
111	3.93571428571368\\
112	3.76428571428512\\
113	3.55714285714225\\
114	3.34999999999937\\
115	3.14285714285649\\
116	2.93571428571361\\
117	2.72857142857074\\
118	2.52142857142786\\
119	2.31428571428498\\
120	2.10714285714217\\
121	1.99999999999942\\
122	2.0642857142852\\
123	2.23571428571381\\
124	2.44285714285672\\
125	2.64999999999961\\
126	2.8571428571425\\
127	3.06428571428539\\
128	3.27142857142828\\
129	3.47857142857117\\
130	3.68571428571406\\
131	3.89285714285688\\
};
\addlegendentry{Agent 1}

\addplot [color=mycolor2, line width=2.0pt]
  table[row sep=crcr]{%
1	5\\
2	4.9\\
3	4.66979166666673\\
4	4.40937500000019\\
5	4.14895833333366\\
6	3.88854166666712\\
7	3.62812500000058\\
8	3.36770833333404\\
9	3.1250000000006\\
10	3.00000000000026\\
11	3.07499999999993\\
12	3.34999999999959\\
13	3.74078947368363\\
14	4.14736842105206\\
15	4.55394736842048\\
16	4.96052631578891\\
17	5.36710526315733\\
18	5.77368421052576\\
19	6.18026315789419\\
20	6.58684210526261\\
21	6.89342105263104\\
22	6.99999999999947\\
23	6.90657894736789\\
24	6.61315789473632\\
25	6.21558171745105\\
26	5.81385041551208\\
27	5.41211911357311\\
28	5.01038781163414\\
29	4.60865650969517\\
30	4.20692520775619\\
31	3.80519390581722\\
32	3.40346260387825\\
33	3.10173130193928\\
34	3.00000000000031\\
35	3.09826869806134\\
36	3.39653739612237\\
37	3.79589954803956\\
38	4.19635515381291\\
39	4.59681075958626\\
40	4.99726636535961\\
41	5.39772197113296\\
42	5.79817757690631\\
43	6.19863318267966\\
44	6.59908878845309\\
45	6.8995443942266\\
46	7.00000000000011\\
47	6.90045560577362\\
48	6.60091121154713\\
49	6.20107906630577\\
50	5.80095917004953\\
51	5.4008392737933\\
52	5.00071937753706\\
53	4.60059948128082\\
54	4.20047958502459\\
55	3.80035968876835\\
56	3.40023979251211\\
57	3.10011989625588\\
58	2.99999999999964\\
59	3.0998801037434\\
60	3.39976020748717\\
61	3.79971603518224\\
62	4.19974758682863\\
63	4.59977913847502\\
64	4.9998106901214\\
65	5.39984224176779\\
66	5.79987379341418\\
67	6.19990534506056\\
68	6.59993689670703\\
69	6.89996844835358\\
70	7.00000000000013\\
71	6.90003155164667\\
72	6.60006310329322\\
73	6.20007472758409\\
74	5.80006642451927\\
75	5.40005812145445\\
76	5.00004981838963\\
77	4.60004151532481\\
78	4.20003321226\\
79	3.80002490919518\\
80	3.40001660613036\\
81	3.10000830306554\\
82	3.00000000000072\\
83	3.09999169693591\\
84	3.39998339387109\\
85	3.79997509080627\\
86	4.19997264872831\\
87	4.59997606763721\\
88	4.9999794865461\\
89	5.399982905455\\
90	5.7999863243639\\
91	6.19998974327279\\
92	6.59999316218177\\
93	6.89999658109083\\
94	6.99999999999989\\
95	6.90000341890895\\
96	6.600006837818\\
97	6.20001025672706\\
98	5.80001126228859\\
99	5.40000985450259\\
100	5.00000844671659\\
101	4.60000703893059\\
102	4.20000563114459\\
103	3.80000422335858\\
104	3.40000281557258\\
105	3.10000140778658\\
106	3.00000000000058\\
107	3.09999859221458\\
108	3.39999718442858\\
109	3.79999577664258\\
110	4.19999536258784\\
111	4.59999594226438\\
112	4.99999652194091\\
113	5.39999710161744\\
114	5.79999768129397\\
115	6.1999982609705\\
116	6.59999884064712\\
117	6.89999942032381\\
118	7.0000000000005\\
119	6.9000005796772\\
120	6.60000115935389\\
121	6.20000137291893\\
122	5.8000011885919\\
123	5.40000097248447\\
124	5.00000075637704\\
125	4.60000054026961\\
126	4.20000032416218\\
127	3.80000010805475\\
128	3.40000000000059\\
129	3.09999999999968\\
130	2.99999999999878\\
131	3.09999999999788\\
};
\addlegendentry{Agent 2}

\addplot [color=mycolor3, line width=2.0pt]
  table[row sep=crcr]{%
1	9\\
2	9\\
3	9\\
4	8.9352309344792\\
5	8.80568566959838\\
6	8.67523439442629\\
7	8.54357490220906\\
8	8.41049207150327\\
9	8.2750452429021\\
10	8.13834841430045\\
11	8.00165158569827\\
12	7.86495475709608\\
13	7.69678495633203\\
14	7.49714218340613\\
15	7.29749941048023\\
16	7.09785663755433\\
17	6.89821386462843\\
18	6.69857109170252\\
19	6.49892831877662\\
20	6.29928554585072\\
21	6.09964277292482\\
22	5.99999999999894\\
23	6.06791478729293\\
24	6.23903775257358\\
25	6.44581133562103\\
26	6.65258491866845\\
27	6.85935850171584\\
28	7.06613208476324\\
29	7.27290566781063\\
30	7.47967925085803\\
31	7.68645283390543\\
32	7.89322641695282\\
33	8.00000000000022\\
34	7.93553560470053\\
35	7.76412208009416\\
36	7.55699740452819\\
37	7.34987272896221\\
38	7.14274805339624\\
39	6.93562337783027\\
40	6.72849870226429\\
41	6.52137402669832\\
42	6.31424935113235\\
43	6.10712467556618\\
44	5.99999999999983\\
45	6.06429451182259\\
46	6.23572234240601\\
47	6.44286430436096\\
48	6.65000626631589\\
49	6.8571482282708\\
50	7.0642901902257\\
51	7.27143215218061\\
52	7.47857411413552\\
53	7.68571607609042\\
54	7.89285803804519\\
55	7.99999999999982\\
56	7.9357138525585\\
57	7.76428531760617\\
58	7.55714250453878\\
59	7.34999969147138\\
60	7.14285687840399\\
61	6.93571406533659\\
62	6.7285712522692\\
63	6.5214284392018\\
64	6.31428562613441\\
65	6.10714281306714\\
66	5.99999999999999\\
67	6.06428573561267\\
68	6.23571430524528\\
69	6.44285716021798\\
70	6.65000001519066\\
71	6.85714287016331\\
72	7.06428572513596\\
73	7.27142858010861\\
74	7.47857143508126\\
75	7.68571429005391\\
76	7.89285714502656\\
77	7.99999999999921\\
78	7.93571428466342\\
79	7.76428571332334\\
80	7.55714285628742\\
81	7.3499999992515\\
82	7.14285714221557\\
83	6.93571428517965\\
84	6.72857142814373\\
85	6.5214285711078\\
86	6.31428571407188\\
87	6.10714285703567\\
88	5.99999999999917\\
89	6.06428571433639\\
90	6.23571428576071\\
91	6.44285714289837\\
92	6.6500000000361\\
93	6.85714285717393\\
94	7.06428571431176\\
95	7.27142857144959\\
96	7.47857142858741\\
97	7.68571428572524\\
98	7.89285714286301\\
99	8.00000000000071\\
100	7.93571428571247\\
101	7.76428571428407\\
102	7.55714285714146\\
103	7.34999999999888\\
104	7.1428571428563\\
105	6.93571428571372\\
106	6.72857142857117\\
107	6.52142857142861\\
108	6.31428571428605\\
109	6.10714285714319\\
110	6.00000000000003\\
111	6.06428571428561\\
112	6.23571428571418\\
113	6.44285714285706\\
114	6.65000000000002\\
115	6.85714285714298\\
116	7.06428571428594\\
117	7.2714285714289\\
118	7.47857142857187\\
119	7.68571428571483\\
120	7.89285714285772\\
121	8.00000000000055\\
122	7.93571428571481\\
123	7.76428571428616\\
124	7.55714285714322\\
125	7.35000000000029\\
126	7.14285714285737\\
127	6.93571428571445\\
128	6.72857142857152\\
129	6.5214285714286\\
130	6.31428571428568\\
131	6.10714285714282\\
};
\addlegendentry{Agent 3}

\end{axis}

\end{tikzpicture}
    \label{fig:xk1}
  \end{subfigure}
  \begin{subfigure}{0.245\textwidth} 
    \definecolor{mycolor1}{rgb}{0.00000,0.44700,0.74100}%
\definecolor{mycolor2}{rgb}{0.85000,0.32500,0.09800}%
\definecolor{mycolor3}{rgb}{0.92900,0.69400,0.12500}%
\begin{tikzpicture}

\begin{axis}[%
width=1.3in,
height=1.16in,
at={(0.in,0.in)},
scale only axis,
xmin=0,
xmax=130,
xlabel style={font=\color{white!15!black},yshift=.2cm},
xlabel={Time ($10^{-1}$s)},
ymin=0,
ymax=9,
ylabel style={font=\color{white!15!black},yshift=-0.2cm},
ylabel={$(x_i)_3$},
axis background/.style={fill=white},
xmajorgrids,
ymajorgrids,
legend columns=2,
legend style={font=\tiny,at={(0.05,0.0)},opacity=0.6, anchor=south west, legend cell align=left, align=left, draw=white!15!black, text opacity=1}
]
\addplot [color=mycolor1, line width=2.0pt]
  table[row sep=crcr]{%
1	9\\
2	8.9\\
3	8.62941176470584\\
4	8.28823529411752\\
5	7.9470588235292\\
6	7.60588235294087\\
7	7.26470588235254\\
8	6.92352941176422\\
9	6.58235294117589\\
10	6.24117647058756\\
11	5.99999999999923\\
12	5.9588235294109\\
13	6.09938080495272\\
14	6.32167182662469\\
15	6.54396284829664\\
16	6.76625386996856\\
17	6.98854489164049\\
18	7.21083591331241\\
19	7.43312693498433\\
20	7.65541795665626\\
21	7.87770897832837\\
22	8.00000000000067\\
23	7.9430440427453\\
24	7.77099822861939\\
25	7.56310953655062\\
26	7.35522084448184\\
27	7.14733215241307\\
28	6.93944346034429\\
29	6.73155476827552\\
30	6.52366607620674\\
31	6.31577738413797\\
32	6.10788869206913\\
33	6.00000000000024\\
34	6.0639248264185\\
35	6.23538378840431\\
36	6.44256336747054\\
37	6.64974294653674\\
38	6.85692252560291\\
39	7.06410210466909\\
40	7.27128168373526\\
41	7.47846126280143\\
42	7.68564084186761\\
43	7.89282042093407\\
44	8.00000000000082\\
45	7.93573205438792\\
46	7.76430198665\\
47	7.55715732146673\\
48	7.35001265628346\\
49	7.14286799110019\\
50	6.93572332591692\\
51	6.72857866073364\\
52	6.52143399555037\\
53	6.3142893303671\\
54	6.10714466518302\\
55	5.99999999999812\\
56	6.06428483942486\\
57	6.2357134845259\\
58	6.44285643068961\\
59	6.64999937685342\\
60	6.85714232301732\\
61	7.06428526918123\\
62	7.27142821534513\\
63	7.47857116150904\\
64	7.68571410767294\\
65	7.89285705383702\\
66	8.00000000000127\\
67	7.93571432879037\\
68	7.76428575373437\\
69	7.55714289220839\\
70	7.35000003068242\\
71	7.14285716915644\\
72	6.93571430763047\\
73	6.72857144610449\\
74	6.52142858457852\\
75	6.31428572305254\\
76	6.1071428615258\\
77	5.9999999999983\\
78	6.0642857121625\\
79	6.23571428376987\\
80	6.44285714112871\\
81	6.64999999848765\\
82	6.85714285584668\\
83	7.06428571320572\\
84	7.27142857056475\\
85	7.47857142792379\\
86	7.68571428528282\\
87	7.89285714264188\\
88	8.00000000000097\\
89	7.93571428581986\\
90	7.76428571438511\\
91	7.55714285723691\\
92	7.35000000008874\\
93	7.14285714294056\\
94	6.93571428579239\\
95	6.72857142864422\\
96	6.52142857149604\\
97	6.31428571434787\\
98	6.10714285719976\\
99	6.00000000005172\\
100	6.06428571432961\\
101	6.23571428574767\\
102	6.44285714287993\\
103	6.65000000001217\\
104	6.85714285714441\\
105	7.06428571427664\\
106	7.27142857140888\\
107	7.47857142854112\\
108	7.68571428567336\\
109	7.89285714280553\\
110	7.99999999993764\\
111	7.9357142856414\\
112	7.7642857142025\\
113	7.55714285704928\\
114	7.34999999989606\\
115	7.14285714274284\\
116	6.93571428558962\\
117	6.7285714284364\\
118	6.52142857128318\\
119	6.31428571412996\\
120	6.10714285697681\\
121	5.99999999982372\\
122	6.06428571409915\\
123	6.23571428551742\\
124	6.44285714264999\\
125	6.64999999978254\\
126	6.85714285691509\\
127	7.06428571404764\\
128	7.27142857118019\\
129	7.47857142831273\\
130	7.68571428544528\\
131	7.89285714257776\\
};
\addlegendentry{Agent 1}

\addplot [color=mycolor2, line width=2.0pt]
  table[row sep=crcr]{%
1	5\\
2	4.9\\
3	4.66979166666673\\
4	4.40937500000019\\
5	4.14895833333366\\
6	3.88854166666712\\
7	3.62812500000058\\
8	3.36770833333404\\
9	3.1250000000006\\
10	3.00000000000026\\
11	3.07499999999993\\
12	3.34999999999959\\
13	3.74078947368363\\
14	4.14736842105206\\
15	4.55394736842048\\
16	4.96052631578891\\
17	5.36710526315733\\
18	5.77368421052576\\
19	6.18026315789419\\
20	6.58684210526261\\
21	6.89342105263104\\
22	6.99999999999947\\
23	6.90657894736789\\
24	6.61315789473632\\
25	6.21558171745105\\
26	5.81385041551208\\
27	5.41211911357311\\
28	5.01038781163414\\
29	4.60865650969517\\
30	4.20692520775619\\
31	3.80519390581722\\
32	3.40346260387825\\
33	3.10173130193928\\
34	3.00000000000031\\
35	3.09826869806134\\
36	3.39653739612237\\
37	3.79589954803956\\
38	4.19635515381291\\
39	4.59681075958626\\
40	4.99726636535961\\
41	5.39772197113296\\
42	5.79817757690631\\
43	6.19863318267966\\
44	6.59908878845309\\
45	6.8995443942266\\
46	7.00000000000011\\
47	6.90045560577362\\
48	6.60091121154713\\
49	6.20107906630577\\
50	5.80095917004953\\
51	5.4008392737933\\
52	5.00071937753706\\
53	4.60059948128082\\
54	4.20047958502459\\
55	3.80035968876835\\
56	3.40023979251211\\
57	3.10011989625588\\
58	2.99999999999964\\
59	3.0998801037434\\
60	3.39976020748717\\
61	3.79971603518224\\
62	4.19974758682863\\
63	4.59977913847502\\
64	4.9998106901214\\
65	5.39984224176779\\
66	5.79987379341418\\
67	6.19990534506056\\
68	6.59993689670703\\
69	6.89996844835358\\
70	7.00000000000013\\
71	6.90003155164667\\
72	6.60006310329322\\
73	6.20007472758409\\
74	5.80006642451927\\
75	5.40005812145445\\
76	5.00004981838963\\
77	4.60004151532481\\
78	4.20003321226\\
79	3.80002490919518\\
80	3.40001660613036\\
81	3.10000830306554\\
82	3.00000000000072\\
83	3.09999169693591\\
84	3.39998339387109\\
85	3.79997509080627\\
86	4.19997264872831\\
87	4.59997606763721\\
88	4.9999794865461\\
89	5.399982905455\\
90	5.7999863243639\\
91	6.19998974327279\\
92	6.59999316218177\\
93	6.89999658109083\\
94	6.99999999999989\\
95	6.90000341890895\\
96	6.600006837818\\
97	6.20001025672706\\
98	5.80001126228859\\
99	5.40000985450259\\
100	5.00000844671659\\
101	4.60000703893059\\
102	4.20000563114459\\
103	3.80000422335858\\
104	3.40000281557258\\
105	3.10000140778658\\
106	3.00000000000058\\
107	3.09999859221458\\
108	3.39999718442858\\
109	3.79999577664258\\
110	4.19999536258784\\
111	4.59999594226438\\
112	4.99999652194091\\
113	5.39999710161744\\
114	5.79999768129397\\
115	6.1999982609705\\
116	6.59999884064712\\
117	6.89999942032381\\
118	7.0000000000005\\
119	6.9000005796772\\
120	6.60000115935389\\
121	6.20000137291893\\
122	5.8000011885919\\
123	5.40000097248447\\
124	5.00000075637704\\
125	4.60000054026961\\
126	4.20000032416218\\
127	3.80000010805475\\
128	3.40000000000059\\
129	3.09999999999968\\
130	2.99999999999878\\
131	3.09999999999788\\
};
\addlegendentry{Agent 2}

\addplot [color=mycolor3, line width=2.0pt]
  table[row sep=crcr]{%
1	1\\
2	1.1\\
3	1.37058823529416\\
4	1.71176470588248\\
5	2.0529411764708\\
6	2.39411764705912\\
7	2.73529411764744\\
8	3.07647058823577\\
9	3.41764705882409\\
10	3.75882352941241\\
11	4.00000000000073\\
12	4.04117647058905\\
13	3.90061919504724\\
14	3.67832817337531\\
15	3.45603715170337\\
16	3.23374613003143\\
17	3.01145510835949\\
18	2.78916408668756\\
19	2.56687306501562\\
20	2.34458204334368\\
21	2.12229102167156\\
22	1.99999999999924\\
23	2.0569559572546\\
24	2.22900177138049\\
25	2.43689046344925\\
26	2.64477915551811\\
27	2.85266784758707\\
28	3.06055653965602\\
29	3.26844523172498\\
30	3.47633392379393\\
31	3.68422261586289\\
32	3.89211130793184\\
33	4.0000000000008\\
34	3.93607517358249\\
35	3.7646162115966\\
36	3.55743663253039\\
37	3.35025705346419\\
38	3.14307747439798\\
39	2.93589789533178\\
40	2.72871831626557\\
41	2.52153873719937\\
42	2.31435915813317\\
43	2.10717957906628\\
44	1.99999999999871\\
45	2.06426794561121\\
46	2.23569801334923\\
47	2.4428426785327\\
48	2.64998734371615\\
49	2.85713200889957\\
50	3.06427667408298\\
51	3.2714213392664\\
52	3.47856600444982\\
53	3.68571066963324\\
54	3.89285533481645\\
55	3.99999999999943\\
56	3.93571516057183\\
57	3.76428651547098\\
58	3.55714356930746\\
59	3.35000062314395\\
60	3.14285767698044\\
61	2.93571473081692\\
62	2.72857178465341\\
63	2.52142883848989\\
64	2.31428589232638\\
65	2.10714294616263\\
66	1.99999999999864\\
67	2.06428567120967\\
68	2.2357142462657\\
69	2.4428571077917\\
70	2.64999996931768\\
71	2.85714283084363\\
72	3.06428569236958\\
73	3.27142855389553\\
74	3.47857141542148\\
75	3.68571427694743\\
76	3.89285713847338\\
77	3.99999999999934\\
78	3.93571428783446\\
79	3.7642857162273\\
80	3.55714285886867\\
81	3.35000000151004\\
82	3.1428571441514\\
83	2.93571428679277\\
84	2.72857142943414\\
85	2.52142857207551\\
86	2.31428571471688\\
87	2.10714285735816\\
88	1.99999999999934\\
89	2.0642857141806\\
90	2.2357142856153\\
91	2.44285714276334\\
92	2.64999999991146\\
93	2.85714285705967\\
94	3.06428571420788\\
95	3.2714285713561\\
96	3.47857142850431\\
97	3.68571428565252\\
98	3.89285714280067\\
99	3.99999999994875\\
100	3.9357142856709\\
101	3.76428571425288\\
102	3.55714285712066\\
103	3.34999999998846\\
104	3.14285714285626\\
105	2.93571428572408\\
106	2.7285714285919\\
107	2.52142857145973\\
108	2.31428571432756\\
109	2.10714285719508\\
110	2.0000000000623\\
111	2.06428571435827\\
112	2.23571428579722\\
113	2.44285714295048\\
114	2.65000000010383\\
115	2.85714285725717\\
116	3.06428571441052\\
117	3.27142857156387\\
118	3.47857142871721\\
119	3.68571428587056\\
120	3.89285714302384\\
121	4.00000000017705\\
122	3.93571428590169\\
123	3.76428571448343\\
124	3.55714285735087\\
125	3.35000000021833\\
126	3.14285714308579\\
127	2.93571428595325\\
128	2.72857142882071\\
129	2.52142857168818\\
130	2.31428571455564\\
131	2.10714285742316\\
};
\addlegendentry{Agent 3}

\end{axis}

\end{tikzpicture}%
    \label{fig:xk3}
  \end{subfigure}
  \hfill 
  \begin{subfigure}{0.245\textwidth} 
   \definecolor{mycolor1}{rgb}{0.00000,0.44700,0.74100}%
\definecolor{mycolor2}{rgb}{0.85000,0.32500,0.09800}%
\definecolor{mycolor3}{rgb}{0.92900,0.69400,0.12500}%
\begin{tikzpicture}

\begin{axis}[%
width=1.3in,
height=1.16in,
at={(0.in,0.in)},
scale only axis,
xmin=0,
xmax=130,
xlabel style={font=\color{white!15!black},yshift=.2cm},
xlabel={Time ($10^{-1}$s)},
ymin=-20.5,
ymax=20.5,
ylabel style={font=\color{white!15!black},yshift=-0.5cm},
ylabel={$(u_i)_1$},
axis background/.style={fill=white},
xmajorgrids,
ymajorgrids,
legend columns=2,
legend style={font=\tiny,at={(0.05,0.0)},opacity=0.6, anchor=south west, legend cell align=left, align=left, draw=white!15!black, text opacity=1}
]
\addplot [color=mycolor1, line width=1.0pt]
  table[row sep=crcr]{%
1	0\\
2	0\\
3	12.9538131041888\\
4	0.0014267678151656\\
5	0.17977529041901\\
6	0.0618681185493445\\
7	0.222799579142659\\
8	0.250000000075186\\
9	0\\
10	0\\
11	0\\
12	6.29459443239212\\
13	0\\
14	-4.84590145788388e-12\\
15	0\\
16	0\\
17	0\\
18	1.77635683940025e-15\\
19	0\\
20	5.24499447745551e-12\\
21	-20\\
22	-13.5115120439525\\
23	-7.13012355335721\\
24	0\\
25	0\\
26	0\\
27	0\\
28	0\\
29	0\\
30	0\\
31	-1.50803515944404e-11\\
32	20\\
33	14.2475956694163\\
34	7.14223019192056\\
35	0\\
36	1.94049221136083e-11\\
37	0\\
38	0\\
39	0\\
40	0\\
41	0\\
42	3.1467971052527e-11\\
43	-20\\
44	-14.2838374778348\\
45	-7.14282627430772\\
46	0\\
47	0\\
48	0\\
49	0\\
50	0\\
51	0\\
52	0\\
53	2.75317105862818e-11\\
54	20\\
55	14.28562187919\\
56	7.14285562301244\\
57	0\\
58	-4.84945417156268e-12\\
59	0\\
60	0\\
61	0\\
62	0\\
63	0\\
64	-2.09739904732039e-11\\
65	-20\\
66	-14.2857097359557\\
67	-7.14285706801762\\
68	0\\
69	0\\
70	0\\
71	0\\
72	0\\
73	1.77635683940025e-15\\
74	0\\
75	-3.47420759369108e-11\\
76	20\\
77	14.2857140617355\\
78	7.14285713916883\\
79	0\\
80	-4.84945417156268e-12\\
81	0\\
82	0\\
83	0\\
84	0\\
85	0\\
86	1.33056808152985e-10\\
87	-20\\
88	-14.2857142748272\\
89	-7.14285714269193\\
90	4.2774672692758e-12\\
91	0\\
92	0\\
93	0\\
94	0\\
95	-1.77635683940025e-15\\
96	0\\
97	1.32311939182728e-11\\
98	20\\
99	14.2857142851872\\
100	7.14285714284511\\
101	-4.2810199829546e-12\\
102	0\\
103	0\\
104	0\\
105	0\\
106	0\\
107	0\\
108	-1.32391875240501e-11\\
109	-20\\
110	-14.2857142856683\\
111	-7.14285714286425\\
112	0\\
113	0\\
114	0\\
115	0\\
116	0\\
117	1.77635683940025e-15\\
118	0\\
119	1.32329702751122e-11\\
120	20\\
121	14.2857142857036\\
122	7.14285714286425\\
123	-4.2774672692758e-12\\
124	0\\
125	0\\
126	0\\
127	0\\
128	0\\
129	0\\
130	-1.32418520593092e-11\\
};
\addlegendentry{Agent 1}

\addplot [color=mycolor2, line width=1.0pt]
  table[row sep=crcr]{%
1	-20\\
2	-6.04166666665378\\
3	0\\
4	0\\
5	0\\
6	0\\
7	0\\
8	3.54166666661984\\
9	20\\
10	20\\
11	20\\
12	3.15789473687657\\
13	0\\
14	0\\
15	0\\
16	0\\
17	0\\
18	8.88178419700125e-16\\
19	0\\
20	-20\\
21	-20\\
22	-20\\
23	-20\\
24	-0.831024930739659\\
25	0\\
26	0\\
27	0\\
28	0\\
29	0\\
30	0\\
31	0\\
32	20\\
33	20\\
34	20\\
35	20\\
36	0.218690771231962\\
37	0\\
38	0\\
39	0\\
40	0\\
41	0\\
42	0\\
43	1.61151092470391e-11\\
44	-20\\
45	-20\\
46	-20\\
47	-20\\
48	-0.0575502029747241\\
49	0\\
50	0\\
51	0\\
52	0\\
53	0\\
54	0\\
55	0\\
56	20\\
57	20\\
58	20\\
59	20\\
60	0.0151447902623509\\
61	0\\
62	0\\
63	0\\
64	0\\
65	0\\
66	0\\
67	1.61151092470391e-11\\
68	-20\\
69	-20\\
70	-20\\
71	-20\\
72	-0.00398547113658428\\
73	0\\
74	0\\
75	0\\
76	0\\
77	0\\
78	0\\
79	0\\
80	20\\
81	20\\
82	20\\
83	20\\
84	0\\
85	0.00117219737148491\\
86	0\\
87	0\\
88	0\\
89	0\\
90	0\\
91	1.61151092470391e-11\\
92	-20\\
93	-20\\
94	-20\\
95	-20\\
96	0\\
97	-0.000482669505906586\\
98	0\\
99	0\\
100	0\\
101	0\\
102	0\\
103	0\\
104	20\\
105	20\\
106	20\\
107	20\\
108	0\\
109	0.000198746253287462\\
110	0\\
111	0\\
112	0\\
113	0\\
114	0\\
115	1.61151092470391e-11\\
116	-20\\
117	-20\\
118	-20\\
119	-20\\
120	-7.32223308759217e-05\\
121	-6.35608144325815e-06\\
122	0\\
123	0\\
124	0\\
125	0\\
126	0\\
127	2.16106526949034e-05\\
128	20\\
129	20\\
130	20\\
};
\addlegendentry{Agent 2}

\addplot [color=mycolor3, line width=1.0pt]
  table[row sep=crcr]{%
1	0\\
2	0\\
3	-12.9538131041603\\
4	-0.00142676784388795\\
5	-0.179775290408951\\
6	-0.0618681186194472\\
7	-0.222799579092854\\
8	-0.249999999984237\\
9	-1.09139364212751e-10\\
10	0\\
11	0\\
12	-6.2945944323713\\
13	0\\
14	0\\
15	0\\
16	0\\
17	0\\
18	-1.77635683940025e-15\\
19	0\\
20	1.30925233535639e-12\\
21	20\\
22	13.511512043976\\
23	7.13012355335721\\
24	0\\
25	-4.84590145788388e-12\\
26	0\\
27	0\\
28	0\\
29	0\\
30	0\\
31	0\\
32	-20\\
33	-14.2475956694163\\
34	-7.14223019192056\\
35	0\\
36	0\\
37	0\\
38	0\\
39	0\\
40	0\\
41	0\\
42	-3.80186982646099e-11\\
43	20\\
44	14.283837477823\\
45	7.14282627430772\\
46	0\\
47	-4.84945417156268e-12\\
48	0\\
49	1.0842021724855e-19\\
50	0\\
51	0\\
52	0\\
53	-2.75282377578173e-11\\
54	-20\\
55	-14.28562187919\\
56	-7.14285562301244\\
57	0\\
58	0\\
59	0\\
60	0\\
61	0\\
62	0\\
63	0\\
64	2.42550969993692e-11\\
65	20\\
66	14.2857097359675\\
67	7.14285706801763\\
68	0\\
69	-4.84590145788388e-12\\
70	0\\
71	0\\
72	0\\
73	0\\
74	0\\
75	0\\
76	-20\\
77	-14.2857140616886\\
78	-7.14285713916883\\
79	0\\
80	0\\
81	0\\
82	0\\
83	0\\
84	0\\
85	0\\
86	-5.76829435921805e-11\\
87	20\\
88	14.285714274745\\
89	7.14285714267278\\
90	-4.2774672692758e-12\\
91	1.94013693999295e-11\\
92	0\\
93	0\\
94	0\\
95	0\\
96	0\\
97	-1.32360788995811e-11\\
98	-20\\
99	-14.2857142851872\\
100	-7.14285714284511\\
101	4.2810199829546e-12\\
102	0\\
103	0\\
104	2.59801451942922e-12\\
105	0\\
106	0\\
107	0\\
108	-6.07760508586352e-11\\
109	20\\
110	14.2857142857505\\
111	7.14285714284511\\
112	1.71205272181396e-11\\
113	0\\
114	0\\
115	0\\
116	-1.77635683940025e-15\\
117	0\\
118	0\\
119	-1.32360788995811e-11\\
120	-20\\
121	-14.2857142857153\\
122	-7.14285714286425\\
123	4.2810199829546e-12\\
124	0\\
125	0\\
126	0\\
127	0\\
128	1.77635683940025e-15\\
129	0\\
130	1.32427402377289e-11\\
};
\addlegendentry{Agent 3}

\end{axis}

\end{tikzpicture}%
    \label{fig:uk1}
  \end{subfigure}
  \begin{subfigure}{0.245\textwidth} 
    \definecolor{mycolor1}{rgb}{0.00000,0.44700,0.74100}%
\definecolor{mycolor2}{rgb}{0.85000,0.32500,0.09800}%
\definecolor{mycolor3}{rgb}{0.92900,0.69400,0.12500}%
\begin{tikzpicture}

\begin{axis}[%
width=1.3in,
height=1.16in,
at={(0.in,0.in)},
scale only axis,
xmin=0,
xmax=130,
xlabel style={font=\color{white!15!black},yshift=.2cm},
xlabel={Time ($10^{-1}$s)},
ymin=-20.5,
ymax=20.5,
ylabel style={font=\color{white!15!black},yshift=-0.5cm},
ylabel={$(u_i)_2$},
axis background/.style={fill=white},
xmajorgrids,
ymajorgrids,
legend columns=2,
legend style={font=\tiny,at={(0.05,0.0)},opacity=0.6, anchor=south west, legend cell align=left, align=left, draw=white!15!black, text opacity=1}
]
\addplot [color=mycolor1, line width=1.0pt]
  table[row sep=crcr]{%
1	-20\\
2	-14.1176470588321\\
3	0\\
4	0\\
5	-7.86037901434611e-13\\
6	0\\
7	0\\
8	0\\
9	0\\
10	20\\
11	20\\
12	16.34674922603\\
13	0\\
14	-4.85048112786046e-12\\
15	0\\
16	0\\
17	0\\
18	0\\
19	0\\
20	3.80172014065342e-11\\
21	-20\\
22	-15.8493957855347\\
23	-7.16857558857316\\
24	0\\
25	0\\
26	0\\
27	0\\
28	0\\
29	0\\
30	0\\
31	-1.18012290687895e-11\\
32	20\\
33	14.3627036974302\\
34	7.14412341608151\\
35	0\\
36	-4.84945417156268e-12\\
37	0\\
38	0\\
39	0\\
40	0\\
41	0\\
42	5.76844196631077e-11\\
43	-20\\
44	-14.2895049359316\\
45	-7.14291948907071\\
46	0\\
47	0\\
48	0\\
49	0\\
50	0\\
51	0\\
52	0\\
53	-1.62559086986735e-10\\
54	20\\
55	14.2859009223274\\
56	7.14286021253333\\
57	0\\
58	1.94049221136083e-11\\
59	0\\
60	0\\
61	0\\
62	0\\
63	0\\
64	3.47435727949865e-11\\
65	-20\\
66	-14.2857234750291\\
67	-7.14285729399355\\
68	0\\
69	0\\
70	0\\
71	-5.55111512312578e-17\\
72	0\\
73	-1.77635683940025e-15\\
74	0\\
75	-1.52731441030921e-10\\
76	20\\
77	14.2857147383409\\
78	7.14285715029339\\
79	0\\
80	1.94049221136083e-11\\
81	0\\
82	0\\
83	0\\
84	-1.77635683940025e-15\\
85	0\\
86	5.24648755531548e-12\\
87	-20\\
88	-14.2857143080384\\
89	-7.14285714269193\\
90	4.2774672692758e-12\\
91	0\\
92	0\\
93	0\\
94	0\\
95	-1.77635683940025e-15\\
96	0\\
97	1.32311939182728e-11\\
98	20\\
99	14.2857142851872\\
100	7.14285714284511\\
101	-4.2810199829546e-12\\
102	0\\
103	0\\
104	0\\
105	0\\
106	0\\
107	0\\
108	-1.32391875240501e-11\\
109	-20\\
110	-14.2857142856683\\
111	-7.14285714286425\\
112	0\\
113	0\\
114	0\\
115	0\\
116	0\\
117	1.77635683940025e-15\\
118	0\\
119	1.32329702751122e-11\\
120	20\\
121	14.2857142857036\\
122	7.14285714286425\\
123	-4.2774672692758e-12\\
124	0\\
125	0\\
126	0\\
127	0\\
128	0\\
129	0\\
130	-1.32418520593092e-11\\
};
\addlegendentry{Agent 1}

\addplot [color=mycolor2, line width=1.0pt]
  table[row sep=crcr]{%
1	-20\\
2	-6.04166666665378\\
3	0\\
4	0\\
5	0\\
6	0\\
7	0\\
8	3.54166666661984\\
9	20\\
10	20\\
11	20\\
12	3.15789473687657\\
13	0\\
14	0\\
15	0\\
16	0\\
17	0\\
18	8.88178419700125e-16\\
19	0\\
20	-20\\
21	-20\\
22	-20\\
23	-20\\
24	-0.831024930739659\\
25	0\\
26	0\\
27	0\\
28	0\\
29	0\\
30	0\\
31	0\\
32	20\\
33	20\\
34	20\\
35	20\\
36	0.218690771231962\\
37	0\\
38	0\\
39	0\\
40	0\\
41	0\\
42	0\\
43	1.61151092470391e-11\\
44	-20\\
45	-20\\
46	-20\\
47	-20\\
48	-0.0575502029747241\\
49	0\\
50	0\\
51	0\\
52	0\\
53	0\\
54	0\\
55	0\\
56	20\\
57	20\\
58	20\\
59	20\\
60	0.0151447902623509\\
61	0\\
62	0\\
63	0\\
64	0\\
65	0\\
66	0\\
67	1.61151092470391e-11\\
68	-20\\
69	-20\\
70	-20\\
71	-20\\
72	-0.00398547113658428\\
73	0\\
74	0\\
75	0\\
76	0\\
77	0\\
78	0\\
79	0\\
80	20\\
81	20\\
82	20\\
83	20\\
84	0\\
85	0.00117219737148491\\
86	0\\
87	0\\
88	0\\
89	0\\
90	0\\
91	1.61151092470391e-11\\
92	-20\\
93	-20\\
94	-20\\
95	-20\\
96	0\\
97	-0.000482669505906586\\
98	0\\
99	0\\
100	0\\
101	0\\
102	0\\
103	0\\
104	20\\
105	20\\
106	20\\
107	20\\
108	0\\
109	0.000198746253287462\\
110	0\\
111	0\\
112	0\\
113	0\\
114	0\\
115	1.61151092470391e-11\\
116	-20\\
117	-20\\
118	-20\\
119	-20\\
120	-7.32223308759217e-05\\
121	-6.35608144325815e-06\\
122	0\\
123	0\\
124	0\\
125	0\\
126	0\\
127	2.16106526949034e-05\\
128	20\\
129	20\\
130	20\\
};
\addlegendentry{Agent 2}

\addplot [color=mycolor3, line width=1.0pt]
  table[row sep=crcr]{%
1	20\\
2	14.1176470588321\\
3	0\\
4	0\\
5	0\\
6	0\\
7	0\\
8	0\\
9	0\\
10	-20\\
11	-20\\
12	-16.3467492260258\\
13	0\\
14	0\\
15	0\\
16	0\\
17	0\\
18	0\\
19	0\\
20	-3.80172014065342e-11\\
21	20\\
22	15.8493957855347\\
23	7.16857558857315\\
24	0\\
25	1.94049221136083e-11\\
26	0\\
27	0\\
28	0\\
29	0\\
30	0\\
31	0\\
32	-20\\
33	-14.3627036974536\\
34	-7.14412341606236\\
35	0\\
36	0\\
37	0\\
38	0\\
39	0\\
40	1.77635683940025e-15\\
41	0\\
42	-1.36337725943056e-10\\
43	20\\
44	14.2895049360138\\
45	7.14291948908986\\
46	0\\
47	-4.84945417156268e-12\\
48	0\\
49	1.0842021724855e-19\\
50	0\\
51	0\\
52	0\\
53	-4.39163190828533e-11\\
54	-20\\
55	-14.2859009221161\\
56	-7.14286021253333\\
57	0\\
58	0\\
59	0\\
60	0\\
61	0\\
62	0\\
63	0\\
64	-4.78530558219345e-11\\
65	20\\
66	14.2857234750057\\
67	7.14285729399355\\
68	0\\
69	-4.84945417156268e-12\\
70	0\\
71	0\\
72	0\\
73	1.77635683940025e-15\\
74	0\\
75	0\\
76	-20\\
77	-14.2857147381648\\
78	-7.14285715029339\\
79	0\\
80	0\\
81	0\\
82	0\\
83	0\\
84	1.77635683940025e-15\\
85	0\\
86	-1.83574674403392e-11\\
87	20\\
88	14.2857143080149\\
89	7.14285714267278\\
90	-4.2774672692758e-12\\
91	1.94013693999295e-11\\
92	0\\
93	0\\
94	0\\
95	0\\
96	0\\
97	-1.32360788995811e-11\\
98	-20\\
99	-14.2857142851872\\
100	-7.14285714284511\\
101	4.2810199829546e-12\\
102	0\\
103	0\\
104	2.59801451942922e-12\\
105	0\\
106	0\\
107	0\\
108	-6.07760508586352e-11\\
109	20\\
110	14.2857142857505\\
111	7.14285714284511\\
112	1.71205272181396e-11\\
113	0\\
114	0\\
115	0\\
116	-1.77635683940025e-15\\
117	0\\
118	0\\
119	-1.32360788995811e-11\\
120	-20\\
121	-14.2857142857153\\
122	-7.14285714286425\\
123	4.2810199829546e-12\\
124	0\\
125	0\\
126	0\\
127	0\\
128	1.77635683940025e-15\\
129	0\\
130	1.32427402377289e-11\\
};
\addlegendentry{Agent 3}

\end{axis}

\end{tikzpicture}%
    \label{fig:uk2}
  \end{subfigure}
  \caption{Position and acceleration over a 13-second time horizon under the distributed receding horizon control scheme.}
  \label{fig:main}
\end{figure*}
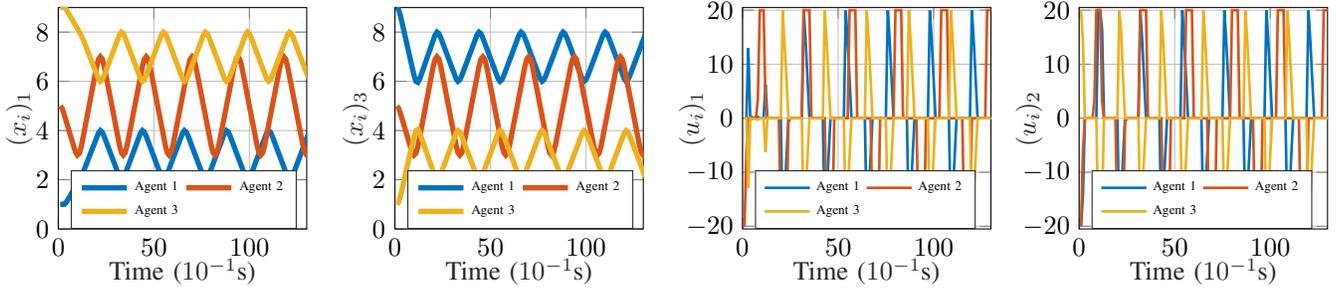

\begin{figure*}
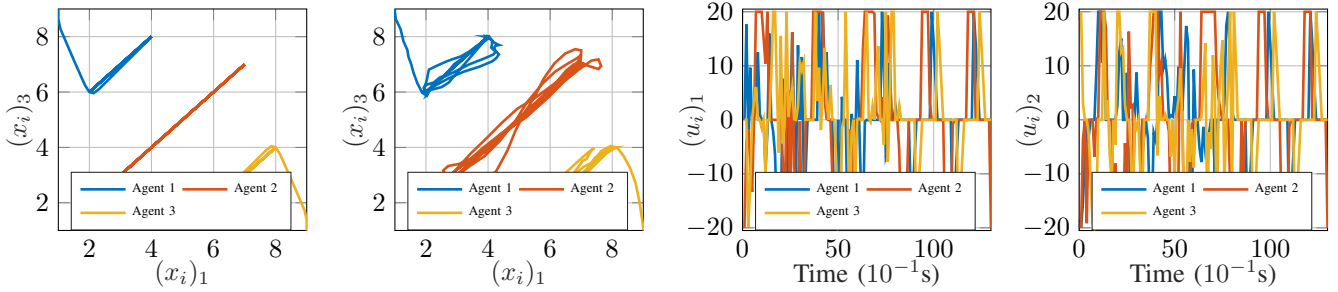
 
  \centering
  \begin{subfigure}{0.245\textwidth} 
    \input{figures/norm_traj}
    \label{fig:nom_traj}
  \end{subfigure}
  \begin{subfigure}{0.245\textwidth} 
    \input{figures/pert_traj}
    \label{fig:per_traj}
  \end{subfigure}
  \hfill 
  \begin{subfigure}{0.245\textwidth} 
   \definecolor{mycolor1}{rgb}{0.00000,0.44700,0.74100}%
\definecolor{mycolor2}{rgb}{0.85000,0.32500,0.09800}%
\definecolor{mycolor3}{rgb}{0.92900,0.69400,0.12500}%
\begin{tikzpicture}

\begin{axis}[%
width=1.3in,
height=1.16in,
at={(0.in,0.in)},
scale only axis,
xmin=0,
xmax=130,
xlabel style={font=\color{white!15!black},yshift=.2cm},
xlabel={Time ($10^{-1}$s)},
ymin=-20.5,
ymax=20.5,
ylabel style={font=\color{white!15!black},yshift=-0.5cm},
ylabel={$(u_i)_1$},
axis background/.style={fill=white},
xmajorgrids,
ymajorgrids,
legend columns=2,
legend style={font=\tiny,at={(0.05,0.0)},opacity=0.6, anchor=south west, legend cell align=left, align=left, draw=white!15!black, text opacity=1}
]
\addplot [color=mycolor1, line width=1.0pt]
  table[row sep=crcr]{%
1	0\\
2	17.6507429123026\\
3	-3.60613101547642\\
4	9.65888120499085\\
5	-1.01626848428912\\
6	0\\
7	4.99180266867242\\
8	12.4983137486793\\
9	1.08362853072928\\
10	0\\
11	0\\
12	-9.16569538797679\\
13	-2.24070438913865\\
14	15.9769848593372\\
15	0\\
16	-2.13954635266494\\
17	-6.44366097175104\\
18	-18.5297500106329\\
19	-20\\
20	-20\\
21	7.59667804548891\\
22	3.88442822617695\\
23	0\\
24	-20\\
25	-4.44089209850063e-16\\
26	16.039023334694\\
27	-2.19317182716622\\
28	-9.86992749498195\\
29	13.8144131354056\\
30	0\\
31	11.6374549555621\\
32	-9.91470549459336\\
33	-10.0203111150821\\
34	0\\
35	0\\
36	0\\
37	0\\
38	0\\
39	0\\
40	20\\
41	0.77048408237772\\
42	20\\
43	7.75766230859777\\
44	12.4699992401675\\
45	0\\
46	0\\
47	0\\
48	-20\\
49	-15.4988098419433\\
50	0\\
51	-20\\
52	4.25993421195017\\
53	0\\
54	0\\
55	-20\\
56	-13.8327000460968\\
57	1.10430411114066\\
58	-1.4862958559344\\
59	-4.0277436969863\\
60	11.0641389560624\\
61	0\\
62	-13.0641314856621\\
63	-18.9361421148615\\
64	-0.293698774035683\\
65	0\\
66	-0.0834590574924204\\
67	0\\
68	0\\
69	0\\
70	0\\
71	0\\
72	0\\
73	20\\
74	8.8055155336003\\
75	-6.93140390328437\\
76	-5.81580269674305\\
77	10.5454686579548\\
78	0\\
79	0\\
80	2.60824967028942\\
81	0\\
82	0\\
83	0\\
84	0\\
85	0\\
86	0\\
87	0\\
88	-14.0417029705714\\
89	-14.4868035228653\\
90	-7.14616453168242\\
91	0\\
92	0\\
93	0\\
94	0\\
95	0\\
96	0\\
97	0\\
98	1.3236522988791e-11\\
99	20\\
100	14.2956151141956\\
101	7.14301998543976\\
102	-4.2810199829546e-12\\
103	0\\
104	0\\
105	0\\
106	0\\
107	0\\
108	0\\
109	-1.32387434348402e-11\\
110	-20\\
111	-14.2862017628232\\
112	-7.14286516056745\\
113	0\\
114	0\\
115	0\\
116	0\\
117	0\\
118	0\\
119	0\\
120	1.32400757024698e-11\\
121	20\\
122	14.2857382871044\\
123	7.14285753762325\\
124	-4.2810199829546e-12\\
125	0\\
126	0\\
127	0\\
128	0\\
129	0\\
130	0\\
};
\addlegendentry{Agent 1}

\addplot [color=mycolor2, line width=1.0pt]
  table[row sep=crcr]{%
1	-20\\
2	-4.16329528205008\\
3	-1.7008607887207\\
4	-8.09015829662922\\
5	-13.3425005643816\\
6	4.44089209850063e-16\\
7	20\\
8	20\\
9	20\\
10	20\\
11	15.1092999842337\\
12	10.4158408206221\\
13	20\\
14	0\\
15	0\\
16	0\\
17	0\\
18	0\\
19	0\\
20	0\\
21	-7.30095812505169\\
22	-20\\
23	4.8746311054856\\
24	5.15857746904658\\
25	-20\\
26	16.2378956177348\\
27	-8.81710318511371\\
28	-11.9182875628515\\
29	-20\\
30	-0.297261994765623\\
31	-20\\
32	-20\\
33	0\\
34	0\\
35	0\\
36	0\\
37	20\\
38	20\\
39	20\\
40	20\\
41	20\\
42	19.693348207266\\
43	3.51189647940613\\
44	0.207828458927299\\
45	0\\
46	0\\
47	-10.7399796084272\\
48	-13.8834594468062\\
49	-20\\
50	-20\\
51	-20\\
52	-20\\
53	-20\\
54	-14.8186725267747\\
55	-20\\
56	-1.71155790433063\\
57	-17.1498873338332\\
58	0\\
59	0\\
60	0\\
61	0\\
62	0\\
63	13.0294267224859\\
64	20\\
65	20\\
66	20\\
67	20\\
68	20\\
69	19.6495095578124\\
70	1.38726629999743\\
71	0.454414959259104\\
72	0\\
73	4.2090078964592\\
74	0\\
75	6.79698054931729\\
76	19.3333599364074\\
77	11.4966617270021\\
78	0\\
79	0\\
80	0\\
81	0\\
82	-5.62005891321557\\
83	-20\\
84	-20\\
85	-19.2360477895757\\
86	-12.6074541407318\\
87	0\\
88	0\\
89	0\\
90	0\\
91	0\\
92	0\\
93	0\\
94	20\\
95	20\\
96	20\\
97	20\\
98	3.19712705646739\\
99	0\\
100	0\\
101	0\\
102	0\\
103	0\\
104	0\\
105	0\\
106	-20\\
107	-20\\
108	-20\\
109	-20\\
110	-0.841349225390188\\
111	0\\
112	0\\
113	0\\
114	0\\
115	0\\
116	0\\
117	0\\
118	20\\
119	20\\
120	20\\
121	20\\
122	0.221407690888125\\
123	0\\
124	0\\
125	0\\
126	0\\
127	0\\
128	0\\
129	0\\
130	-20\\
};
\addlegendentry{Agent 2}

\addplot [color=mycolor3, line width=1.0pt]
  table[row sep=crcr]{%
1	0\\
2	0\\
3	-19.8254751793535\\
4	-11.9639209392074\\
5	0\\
6	3.54962105076993\\
7	6.08798723822153\\
8	0\\
9	0\\
10	-1.49867123842481\\
11	0\\
12	-10.0630640609814\\
13	0\\
14	0\\
15	7.96818336299717\\
16	20\\
17	20\\
18	5.32637928667821\\
19	-6.07375184001285\\
20	15.5465179595194\\
21	-4.37661919543946\\
22	-1.19141177085112\\
23	20\\
24	0.515980003371001\\
25	3.7265143028344\\
26	6.07054083692067\\
27	0.105165645349448\\
28	0\\
29	2.52479311375515\\
30	10.6639785874135\\
31	9.18450721479829\\
32	10.6221441646528\\
33	0\\
34	0\\
35	0\\
36	0\\
37	0\\
38	20\\
39	3.47490482970898\\
40	9.64850767183722\\
41	-1.89629060259904\\
42	5.0497431257444\\
43	13.3289460169908\\
44	3.01382399720643\\
45	0\\
46	6.45757804024983\\
47	0\\
48	0\\
49	-7.42865875043208\\
50	-20\\
51	-19.5181380057476\\
52	-1.49971855203334\\
53	-4.44089209850063e-15\\
54	20\\
55	0\\
56	-20\\
57	-2.60992231108646\\
58	0\\
59	-11.9538719696807\\
60	0\\
61	0.253596085143119\\
62	0\\
63	0\\
64	12.3327800586445\\
65	20\\
66	0.321303507515423\\
67	-0.399056801907136\\
68	2.4863579735696\\
69	0\\
70	8.25466494359716\\
71	13.6621576014939\\
72	0\\
73	0\\
74	-4.61303959442982\\
75	20\\
76	-7.82012291107094\\
77	2.91533256251569\\
78	20\\
79	20\\
80	0\\
81	0\\
82	2.6146118788165\\
83	-4.2774672692758e-12\\
84	-4.83879603052628e-12\\
85	0\\
86	0\\
87	0\\
88	0\\
89	0\\
90	1.29783676135062e-10\\
91	-20\\
92	-13.0713945580341\\
93	-7.1228847789117\\
94	0\\
95	0\\
96	0\\
97	0\\
98	0\\
99	0\\
100	0\\
101	-3.47430863626281e-11\\
102	20\\
103	14.2259260479008\\
104	7.14187378367473\\
105	0\\
106	-4.84590145788388e-12\\
107	0\\
108	0\\
109	0\\
110	-1.77635683940025e-15\\
111	0\\
112	-1.76908793942792e-11\\
113	-20\\
114	-14.2827705524166\\
115	-7.14280872619144\\
116	0\\
117	0\\
118	0\\
119	0\\
120	0\\
121	0\\
122	0\\
123	-2.81828540054384e-11\\
124	20\\
125	14.2855693481139\\
126	7.14285475901162\\
127	0\\
128	-4.84590145788388e-12\\
129	0\\
130	0\\
};
\addlegendentry{Agent 3}

\end{axis}

\end{tikzpicture}%
    \label{fig:uk1per}
  \end{subfigure}
  \begin{subfigure}{0.245\textwidth} 
    \definecolor{mycolor1}{rgb}{0.00000,0.44700,0.74100}%
\definecolor{mycolor2}{rgb}{0.85000,0.32500,0.09800}%
\definecolor{mycolor3}{rgb}{0.92900,0.69400,0.12500}%
\begin{tikzpicture}

\begin{axis}[%
width=1.3in,
height=1.16in,
at={(0.in,0.in)},
scale only axis,
xmin=0,
xmax=130,
xlabel style={font=\color{white!15!black},yshift=.2cm},
xlabel={Time ($10^{-1}$s)},
ymin=-20.5,
ymax=20.5,
ylabel style={font=\color{white!15!black},yshift=-0.5cm},
ylabel={$(u_i)_2$},
axis background/.style={fill=white},
xmajorgrids,
ymajorgrids,
legend columns=2,
legend style={font=\tiny,at={(0.05,0.0)},opacity=0.6, anchor=south west, legend cell align=left, align=left, draw=white!15!black, text opacity=1}
]
\addplot [color=mycolor1, line width=1.0pt]
  table[row sep=crcr]{%
1	-20\\
2	-15.2120441185937\\
3	-6.48157487182918\\
4	0\\
5	-2.99517034883055\\
6	0\\
7	0\\
8	0\\
9	8.98686701890256\\
10	20\\
11	20\\
12	10.5537308390122\\
13	20\\
14	0\\
15	0\\
16	0\\
17	0\\
18	0\\
19	0\\
20	2.60672132825465\\
21	12.4971684054409\\
22	15.097497719762\\
23	0\\
24	0\\
25	8.70847636239584\\
26	0.214634884954042\\
27	-20\\
28	-20\\
29	0.170846187174902\\
30	8.75754571934521\\
31	0\\
32	-5.10160621823161\\
33	-7.04264356379632\\
34	-3.15328124100822\\
35	-0.478164288327913\\
36	-3.35889316343994\\
37	-16.6977709720305\\
38	-19.4311611594982\\
39	-2.38561225400866\\
40	-3.72697544180716\\
41	0\\
42	20\\
43	20\\
44	9.47622409557976\\
45	11.1349303361117\\
46	8.02021054594661\\
47	0.263793068354108\\
48	0.81877490697946\\
49	0.575318867538899\\
50	4.38943158651338\\
51	0.411439737598421\\
52	9.8246752763641\\
53	6.42783615684104\\
54	1.62576101738005\\
55	9.406946206986\\
56	17.3831607059713\\
57	14.3905538950397\\
58	-18.8426663844439\\
59	-19.2442910323488\\
60	9.31242672493681\\
61	-0.949968396162148\\
62	-0.26076233414642\\
63	-5.32133261944621\\
64	-7.83202034196476\\
65	-1.29257355190445\\
66	-5.53127259761529\\
67	-3.39720110679882\\
68	-2.37512018936683\\
69	0\\
70	0\\
71	-10.3003477788453\\
72	0\\
73	0\\
74	0\\
75	-1.86517468137026e-14\\
76	3.04099085340567\\
77	17.4755822539874\\
78	20\\
79	10.8566564767791\\
80	1.92869446182571\\
81	0\\
82	0\\
83	0\\
84	0\\
85	0\\
86	0\\
87	0\\
88	-18.6288886900002\\
89	-14.486803519157\\
90	-7.14616453168242\\
91	0\\
92	0\\
93	0\\
94	0\\
95	0\\
96	0\\
97	0\\
98	1.3236522988791e-11\\
99	20\\
100	14.2956151141956\\
101	7.14301998543976\\
102	-4.2810199829546e-12\\
103	0\\
104	0\\
105	0\\
106	0\\
107	0\\
108	0\\
109	-1.32387434348402e-11\\
110	-20\\
111	-14.2862017628232\\
112	-7.14286516056745\\
113	0\\
114	0\\
115	0\\
116	0\\
117	0\\
118	0\\
119	0\\
120	1.32400757024698e-11\\
121	20\\
122	14.2857382871044\\
123	7.14285753762325\\
124	-4.2810199829546e-12\\
125	0\\
126	0\\
127	0\\
128	0\\
129	0\\
130	0\\
};
\addlegendentry{Agent 1}

\addplot [color=mycolor2, line width=1.0pt]
  table[row sep=crcr]{%
1	-20\\
2	-3.60026025248857\\
3	-0.116141834538819\\
4	-19.1504481462079\\
5	0\\
6	0\\
7	0\\
8	-6.82034841079637\\
9	0\\
10	0.559499831506402\\
11	20\\
12	20\\
13	20\\
14	16.8860300426091\\
15	0\\
16	0\\
17	0\\
18	-16.4797208628038\\
19	-8.07876227039592\\
20	-14.7423037328554\\
21	-20\\
22	-20\\
23	-20\\
24	-4.29638676761378\\
25	-15.1172154975562\\
26	16.3348795759154\\
27	2.37140903882391\\
28	2.71146140354782\\
29	-20\\
30	-20\\
31	-20\\
32	-11.0761823026941\\
33	0\\
34	0\\
35	0\\
36	0\\
37	9.0179592403209\\
38	13.7632533424039\\
39	20\\
40	20\\
41	20\\
42	8.64115468273938\\
43	10.483122484986\\
44	0\\
45	0\\
46	0\\
47	-2.21384078031406\\
48	-19.9657404782192\\
49	-20\\
50	-20\\
51	-20\\
52	-9.67503250832388\\
53	20\\
54	0.725183124571534\\
55	-20\\
56	-20\\
57	-15.3167040574657\\
58	-12.9672828506634\\
59	-0.87197710409227\\
60	0\\
61	0\\
62	0\\
63	0\\
64	20\\
65	20\\
66	20\\
67	20\\
68	20\\
69	20\\
70	20\\
71	20\\
72	8.12526670673358\\
73	3.98820287385726\\
74	1.63205077369639\\
75	0\\
76	0\\
77	0\\
78	0\\
79	-11.6305275401153\\
80	-20\\
81	-20\\
82	-20\\
83	-20\\
84	-19.099071934046\\
85	0\\
86	-10.5281565318919\\
87	0\\
88	0\\
89	0\\
90	0\\
91	0\\
92	0\\
93	0\\
94	20\\
95	20\\
96	20\\
97	20\\
98	4.88503437834305\\
99	0\\
100	0\\
101	0\\
102	0\\
103	0\\
104	0\\
105	0\\
106	-20\\
107	-20\\
108	-20\\
109	-20\\
110	-1.28553536271379\\
111	0\\
112	0\\
113	0\\
114	0\\
115	0\\
116	0\\
117	0\\
118	20\\
119	20\\
120	20\\
121	20\\
122	0.338298779637338\\
123	0\\
124	0\\
125	0\\
126	0\\
127	0\\
128	0\\
129	0\\
130	-20\\
};
\addlegendentry{Agent 2}

\addplot [color=mycolor3, line width=1.0pt]
  table[row sep=crcr]{%
1	20\\
2	3.01502724137476\\
3	0\\
4	0\\
5	-2.81009178470413\\
6	-10.9064970019972\\
7	-19.1439549384086\\
8	-20\\
9	-16.4561488669492\\
10	0\\
11	0\\
12	0\\
13	20\\
14	0\\
15	3.66442147205816\\
16	-6.57111909538798\\
17	-11.4044864515745\\
18	0\\
19	-4.61505776504055\\
20	20\\
21	20\\
22	12.4943462931697\\
23	13.1208627191705\\
24	2.68617776317114\\
25	0\\
26	0\\
27	-5.26061974004019\\
28	-20\\
29	0\\
30	-1.40867634974991\\
31	0\\
32	-9.99509564207326\\
33	-15.2805442301221\\
34	0\\
35	0\\
36	0\\
37	20\\
38	20\\
39	20\\
40	20\\
41	-6.59040674072457\\
42	4.41277220576395\\
43	0\\
44	0\\
45	1.04426371631575\\
46	0\\
47	0\\
48	-6.7904816325485\\
49	0\\
50	-10.643960047552\\
51	1.23142557228372\\
52	3.83673790876014\\
53	9.49406200372323\\
54	8.88178419700125e-16\\
55	-20\\
56	-20\\
57	0\\
58	-7.78531151668479\\
59	-1.7835463129121\\
60	-8.85029672002875\\
61	0\\
62	0\\
63	0\\
64	0.46138712585375\\
65	6.76927511898072\\
66	14.218247089909\\
67	-18.838542292433\\
68	-2.27827516916008\\
69	-5.40656734010554\\
70	11.9854408018728\\
71	5.24128820149505\\
72	-19.1436398148653\\
73	2.03694682750211\\
74	9.83615729932353\\
75	14.7440761990794\\
76	-12.1313814418681\\
77	-0.819273114514847\\
78	7.96666238973557\\
79	20\\
80	-14.8240436304332\\
81	20\\
82	11.3350820178848\\
83	0\\
84	-4.84945417156268e-12\\
85	0\\
86	-1.0842021724855e-19\\
87	0\\
88	1.77635683940025e-15\\
89	0\\
90	2.81873344234367e-11\\
91	-20\\
92	-15.1439411637834\\
93	-7.15697271650194\\
94	0\\
95	0\\
96	0\\
97	0\\
98	0\\
99	0\\
100	0\\
101	5.04710605963273e-11\\
102	20\\
103	14.327969938395\\
104	7.14355213714665\\
105	0\\
106	-4.84945417156268e-12\\
107	0\\
108	0\\
109	0\\
110	0\\
111	0\\
112	1.83574674403392e-11\\
113	-20\\
114	-14.2877947847615\\
115	-7.14289136157803\\
116	0\\
117	0\\
118	0\\
119	0\\
120	0\\
121	0\\
122	0\\
123	1.44192338431823e-11\\
124	20\\
125	14.2858167211319\\
126	7.14285882765053\\
127	0\\
128	-4.84590145788388e-12\\
129	0\\
130	0\\
};
\addlegendentry{Agent 3}

\end{axis}

\end{tikzpicture}%
    \label{fig:uk2per}
  \end{subfigure}
  \caption{(Left) Nominal and perturbed trajectories. (Right) Accelerations under random perturbations in the velocity terms.}
  \label{fig:main2}
\end{figure*}

\section{Conclusion}\label{sec:conclusion}

We have proposed a receding horizon control strategy to address the infinite-horizon synthesis problem of a multi-agent system under recurring STL specifications. To ensure recursive feasibility and guarantee the integrity of the proposed control scheme we have introduced additional constraints and terminal conditions that are easy to construct and intuitive for schemes under recurring tasks. By decomposing the global optimization problem into agent-level programs we implement a scheduling policy that enables individual agents to optimize their control actions sequentially. This approach yields a distributed control strategy that enables online operation, while preserving recursive feasibility. Our method hinges on the feasibility of an initial iteration carried out at the MAS level. Inspired by the proposed scheduling policy and the decomposition approach, we aim to tackle the problem in a fully distributed manner in future work.




\appendices
\section*{Appendix I: Proof of Theorem \ref{thm:recurs_feasibility}}\label{ap:proofThm1}

We show recursive feasibility for $t>N$. The proof for $t\in\N_{[0,N]}$ is similar. Suppose that \eqref{eq:receding_horizon_relax_MAS} is feasible at time $t>N$, with optimal input sequence $\bm{u}^\ast(t_{0:N})$ and optimal state trajectory $\bm{x}^\ast(t_{0:N+1})$. By applying $u(t) = u^\ast(t_0)$, the MAS is driven to $x(t+1)=f(x(t),u(t))=x^\ast(t_1)$. Since $x^\ast(t_N)\in\cC_1(x(t))$, there is an admissible input, $\hat{u}(t+N)$, such that $f(x^\ast(t_N),\hat{u}(t+N)) = x(t)$. At time $t+1$, we design admissible input sequence, $\bm{u}((t+1)_{0:N}) = (\bm{u}^\ast(t_{1:N-1}),\hat{u}(t+N),u^\ast(t_0))$, which yields $\bm{x}((t+1)_{0:N+1})= (x(t+1),\bm{x}^\ast(t_{2:N}),x(t),x(t+1))$, which we check for feasibility. Constraints in \eqref{eq:history1_MAS}-\eqref{eq:history4_MAS}, which, for time $t+1$, are written as $(x(t-N+2),x(t-N+3),\ldots,\bm{x}^\ast(t_{1:2})) \models \phi$, $(x(t-N+3),x(t-N+4),\ldots,\bm{x}^\ast(t_{2:3})) \models \phi$, $\ldots$, $(x(t), \bm{x}^\ast(t_{1:N})) \models \phi$, are feasible by constraints in \eqref{eq:history1_MAS}-\eqref{eq:constraint_phi_MAS} corresponding to time $t$. Constraint in \eqref{eq:constraint_phi_MAS} for time $t+1$, namely, $\bm{x}((t+1)_{0:N}) \models \phi$, which is written as $(\bm{x}^\ast(t_{1:N}), x(t)) \models \phi$, is identical to the constraint in \eqref{eq:con6_MAS} at time $t$, and, hence, is feasible. Constraints in \eqref{eq:con1_MAS}-\eqref{eq:con5_MAS} at time $t+1$ are written as $(\bm{x}^\ast(t_{2:N+1}),x^\ast(t_1))\models \phi$, $(\bm{x}^\ast(t_{3:N+1}),\bm{x}^\ast(t_{1:2}))\models \phi$, $\ldots$, $\bm{x}^\ast(t_{1:N+1})\models\phi$, which are identical to constraints in \eqref{eq:con1_MAS}-\eqref{eq:con5_MAS} at time $t$, with a different order. Due to the choice of $x((t+1)_{N})=x(t)$, constraints in \eqref{eq:con6_MAS}-\eqref{eq:con7_MAS} for the time step $t+1$ are identical to constraints in \eqref{eq:con1_MAS}-\eqref{eq:con5_MAS} at time $t+1$, and, hence, are feasible. Lastly, since $x((t+1)_{N+1}) = x(t+1)$, $x((t+1)_{N})\in\cC_1(x(t+1))$ implying that constraint in \eqref{eq:varying_terminal_constr_MAS} is feasible. This completes the proof. 

\bibliographystyle{IEEEtran} 
\balance
\bibliography{IEEEabrv,biblio}

\end{document}